\documentclass[10pt]{article}

\usepackage{amsmath}
\usepackage{amssymb}  
\usepackage[T1]{fontenc}
\usepackage{pifont}
\usepackage{pstricks}
\usepackage{amsfonts}
\usepackage{graphicx}
\usepackage{amsmath}
\usepackage{authblk}
\usepackage{pstricks} 
\usepackage{pstricks-add}
\usepackage{caption}
\usepackage{placeins}
\usepackage{pstricks}
\usepackage{pdfpages}
\usepackage{framed}
\usepackage{bigints}
\usepackage{cancel}
\usepackage{bm}
\usepackage{epstopdf}
\usepackage{tikz}
\usepackage{relsize}

\def\rr2dot{\mathop{\bf r}\limits}
\def\x2dot{\mathop{x}\limits}
\def\y2dot{\mathop{y}\limits}

\def\bfy2dot{\mathop{\bf y}\limits}
\def\z2dot{\mathop{z}\limits}

\def\csi2dot{\mathop{\xi}\limits}
\def\et2dot{\mathop{\eta}\limits}
\def\bet2dot{\mathop{\beta}\limits}
\def\t2dot{\mathop{\theta}\limits}
\def\s2dot{\mathop{\sigma}\limits}
\def\d2dot{\mathop{\delta }\limits}
\def\q2dot{\mathop{q}\limits}
\def\l2dot{\mathop{\lambda}\limits}
\def\ps2dot{\mathop{{\cal E}}\limits}
\def\tet2dot{\mathop{\theta}\limits}
\def\bfx2dot{\mathop{\bf x}\limits}
\def\bfy2dot{\mathop{\bf y}\limits}
\def\bfq2dot{\mathop{\bf q}\limits}
\def\bfr2dot{\mathop{\bf r}\limits}
\def\bbfq2dot{\mathop{\bar {\bf q}}\limits}
\def\w2{\mathop{W}\limits}
\def\xgrande2dot{\mathop{\bf X}\limits}
\def\p02dot{\mathop{P}\limits}
\def\a2dot{\mathop{A}\limits}

\newtheorem{teo}{Theorem}
\newtheorem{lem}{Lemma}
\newtheorem{prop}{Proposition}

\newtheorem{rem}{Remark}
\newtheorem{exe}{Example}

\title{On the commutation of variation and differentiation in nonholonomic Systems: A Chetaev-based approach}

\author{F.~Talamucci}
\affil{{\it DIMAI, Dipartimento di Matematica e Informatica ``Ulisse Dini''},\\
{\it	Universit\`a degli Studi di Firenze, Italy}\\
{\it	e-mail: federico.talamucci@unifi.it}}
\date{}

\begin{document}
		
	\bibliographystyle{plain}
	
	\setcounter{equation}{0}

	\maketitle
	
	\vspace{.5truecm}
	
	\noindent
	{\bf 2020 Mathematics Subject Classification:} 70F25, 70H25, 70H03.
	
	\vspace{.5truecm}
	
	\noindent
	{\bf Keywords:} 
	
Nonholonomic systems, Chetaev condition, Lagrangian derivative, Variational principles, Transposition rules, Dynamic compensation, Frobenius integrability.
	
	\vspace{.5truecm}
	
\begin{abstract}
	
The derivation of the equations of motion for nonholonomic systems remains a central issue in analytical mechanics, primarily due to the tension between the d’Alembert-Lagrange differential principle and integral variational approaches. This study investigates the validity of the commutation relation between the variational operator and the time derivative, which is a geometric identity in holonomic manifolds but becomes problematic when dealing with velocity-dependent constraints.By analyzing the "transposition rule," we define a formal relationship between the Chetaev variation and the total variation of the constraints. We show that the simultaneous requirement of kinematically admissible variations and the fulfillment of the Chetaev condition is generally incompatible with the standard commutation rule, unless a specific geometric condition—encoded through a skew-symmetric algebraic structure and the Lagrangian derivative of the constraints—is satisfied. Furthermore, this work extends the analysis to systems with multiple constraints introducing the concept of dynamic compensation. While Frobenius' Theorem provides a "static" criterion for integrability based on individual vector fields, our results suggest that dynamic consistency according to Chetaev’s principle emerges as a collective phenomenon. We demonstrate that even when individual constraints are intrinsically non-integrable, their interactions can "cancel out" deviations from holonomy, maintaining global consistency. Notably, we show that for systems with high constraints 
this property is satisfied regardless of the constraints' form. These findings broaden the class of analyzable physical systems, suggesting that dynamic consistency is a resilient property that persists even in the absence of simple geometric integrability.
	
\end{abstract}

\section{Introduction}

This study stems from a long-standing issue regarding the dynamics of constrained systems, namely the interchangeability of the operators $\delta$ and $d/dt$, where $\delta$ represents the variational operator (virtual displacement) and $d/dt$ the total derivative with respect to time. More precisely, we refer to the transposition rule as the identity
\begin{equation}
	\label{commrel}
	\delta \frac{dq_i}{dt} = \frac{d}{dt} \delta q_i
\end{equation}
which asserts the commutation relation between the variation and the time derivative for any of the Lagrangian coordinates $q_i$, $i=1, \dots, n$. This identity implies that the variation of the velocity is equal to the rate of change of the virtual displacement. In a broader sense, it states that the processes of varying a path and evolving along a path are commutative.

In the context of holonomic mechanics, where constraints depend strictly on positions and time, the commutation relation (\ref{commrel}) is an established and uncontested property. J.~L.~Lagrange \cite{lagrange} implicitly relied on the identity $\delta (dq) = d(\delta q)$ as a necessary step to derive the equations of motion from the Principle of Virtual Work and d'Alembert's Principle.

For holonomic systems, this identity is a consistent result of the independence between virtual variations and time evolution; its formal justification was later refined. V.~Volterra \cite{volterra} and other mathematicians provided a rigorous foundation within the calculus of variations, confirming that for admissible virtual displacements in holonomic manifolds, the operators must commute to maintain geometric consistency.

While (\ref{commrel}) is regarded as an undisputed geometric identity in the study of holonomic systems, its validity and application become contentious when dealing with nonholonomic systems—those characterized by non-integrable constraints on velocities. The transition to nonholonomic systems (e.g., a sphere rolling without slipping or a wheeled robot) led to a bifurcation in mathematical modeling, centered on how constraints interact with the variational process.

Significant attention was drawn to this subject by H.~Hertz \cite{hertz}, who suggested that for nonholonomic systems, the variation and the derivative do not necessarily commute. In 1896, O.~H\"older \cite{holder} demonstrated that if constraints are imposed on the kinematic velocities \textit{before} varying the trajectory, the identity $\delta \dot{q} = \frac{d}{dt} \delta q$ fails. This framework was modernized by V.~V.~Kozlov \cite{kozlov} as ``vakonomic dynamics'' (Variational Axiomatic Nonholonomic). In this model, constraints are incorporated into the Lagrangian via multipliers before variation, leading to equations where the multipliers $\lambda_\nu$ often appear under a time derivative ($\dot{\lambda}_\nu$). While mathematically elegant and useful in optimal control theory or biological growth models, this approach generally fails to describe the physical reality of classical mechanical systems.

Important treatises on mechanics have expressed specific positions regarding commutation. The exchangeability of the variation operator $\delta$ and the time derivative $d/dt$ is a foundational yet subtle point in analytical mechanics. Its validity is closely tied to the definition of virtual displacements and the nature of the system's constraints.
In his influential work \cite{pars} L.~A.~Pars adopts a definitive stance, asserting that the commutativity of $\delta$ and $d/dt$ is an inherent property of the variational calculus. Pars emphasizes that if the variation $\delta$ is understood as a displacement between two points at the same instant $t$ (synchronous variation), the identity:$$\delta \left( \frac{dq}{dt} \right) = \frac{d}{dt} (\delta q)$$is an analytical necessity rather than a physical assumption. Crucially, Pars maintains the validity of this rule even for nonholonomic systems. He argues that the complexities of nonholonomic dynamics do not arise from a failure of the commutation rule itself, but rather from the constraints imposed on the variations $\delta q_i$. In Pars' view, the operators commute by definition of the variational process, and any discrepancy in the equations of motion must be attributed to the admissibility of the displacements, not to the operator algebra.

Similarly, in his treatise \cite{lurie}, A.~I.~Lurie treats the commutativity of $\delta$ and $d/dt$ as a rigorous consequence of synchronous variation. Under the condition that the time variable remains unvaried ($\delta t = 0$), the variation $\delta$ is defined as a differential shift between the actual trajectory and a kinematically admissible comparison trajectory at the same instant $t$. Lurie demonstrates that this identity holds because $\delta$ is treated as an operator acting on the configuration space independently of the temporal evolution parameter. For Lurie, this commutativity is a structural necessity for the standard derivation of the Euler-Lagrange equations via Hamilton’s Principle.

A more nuanced and critical perspective is offered by Mei Fengxiang \cite{mei}. He highlights that the "commutativity condition" is not a universal law but a choice of variational procedure that can lead to physical discrepancies in complex systems. While Mei aligns with the classical view for holonomic systems, he explores the tension between "commutative" and "non-commutative" variations in nonholonomic contexts. He notes that if one forces the variations to satisfy the constraints (as in the Vakonomic approach), the standard commutation rule may lead to equations of motion that differ from those derived via the d'Alembert-Lagrange principle. Mei’s work emphasizes that in systems with non-integrable constraints, the operator $[\delta, d/dt]$ may not vanish if the definition of the virtual displacement deviates from the "frozen time" convention.

\subsubsection*{The Chetaev condition}

To resolve the discrepancies between vakonomic predictions and physical reality, N.~G.~Chetaev \cite{chetaev}, \cite{chetaev62} proposed a rigorous foundation based on the d’Alembert-Lagrange principle. Chetaev argued that virtual displacements $\delta q$ are not arbitrary but must be ``compatible'' with the constraints. For a nonholonomic constraint $g_\nu(q, \dot{q}, t) = 0$, the variations must satisfy the Chetaev condition:
\begin{equation}
	\label{dcdefint}
	\sum_{i=1}^n \frac{\partial g_\nu}{\partial \dot{q}_i} \delta q_i = 0
\end{equation}
This assumption ensures that the virtual work of the constraint forces vanishes, maintaining consistency with the behavior of physical systems like wheels or skates. The Chetaev condition, which lacks a purely theoretical or mathematical justification, should be regarded as a physical choice that defines the ideal nature of the constraint. The widespread adoption of this condition in modeling is also due to the analytical convenience it offers in handling various aspects, such as the derivation of the equations of motion.

Starting from d’Alembert’s Principle (which is a differential, and thus instantaneous, principle):
\begin{equation}
	\sum_{i=1}^n \left[ \frac{d}{dt} \left( \frac{\partial \mathcal{L}}{\partial \dot{q}_i} \right) - \frac{\partial \mathcal{L}}{\partial q_i} \right] \delta q_i = 0
\end{equation}
In this framework, the $\delta q_i$ represent instantaneous virtual displacements. Since there is no need to vary an entire trajectory over a time interval, the question of whether $\delta$ and $d/dt$ commute along a path does not even arise. It is sufficient to require that the $\delta q_i$ satisfy the Chetaev condition (\ref{dcdefint}). In this case, the Lagrange equations with multipliers emerge directly as an algebraic consequence of the linear dependence of the variations, without ever invoking the commutation relation (\ref{commrel}):
\begin{equation}
	\frac{d}{dt} \frac{\partial L}{\partial \dot{q}_i} - \frac{\partial L}{\partial q_i} = \sum_{\nu=1}^\kappa \lambda_\nu \frac{\partial g_\nu}{\partial \dot{q}_i}, \qquad i=1, \dots, n
\end{equation}
where $\kappa < n$ is the number of constraints.

\subsubsection*{The variational approach}

Why, then, do we speak of commutation? The discussion regarding commutation (\ref{commrel}) becomes necessary only if one intends to derive the equations of motion from Hamilton's Principle (the integral approach); that is, by varying the action $S = \int L \, dt$, where the entire path, rather than just an instantaneous state, is subject to variation.

When applying a variational approach to nonholonomic systems, constraints can be incorporated directly into the action functional by considering the integral $\int (L + \sum_{\nu=1}^\kappa \lambda_\nu g_\nu) dt$ \cite{arnoldkozlovnei}, \cite{ray}, \cite{kozlov}. This method fundamentally redefines the role of the Lagrange multipliers. During the variation of the functional, the dependence of the constraints $g_\nu$ on the velocities $\dot{q}$ becomes a critical factor. Specifically, the total variation of the constraint functions is defined as:
\begin{equation}
	\label{dvdefint}
	\sum_{i=1}^n \frac{\partial g_\nu}{\partial q_i} \delta q_i + \sum_{i=1}^n \frac{\partial g_\nu}{\partial \dot{q}_i} \delta \dot{q}_i = 0
\end{equation}
where the vanishing of this quantity ensures that the varied curves consistently satisfy the constraints; that is, the varied curve is still kinematically admissible. In the variational process, integrating the term containing $\delta \dot{q}_i$ (or $\frac{d}{dt}\delta q_i$) by parts forces the multipliers $\lambda_\nu$ to appear under a time derivative ($\dot{\lambda}_\nu$). This profoundly alters the structure of the resulting equations of motion:
\begin{equation}
	\label{eqvak}
	\frac{d}{dt} \left( \frac{\partial L}{\partial \dot{q}_i} + \sum_{\nu=1}^\kappa \lambda_\nu \frac{\partial g_\nu}{\partial \dot{q}_i} \right) - \left( \frac{\partial L}{\partial q_i} + \sum_{\nu=1}^\kappa \lambda_\nu \frac{\partial g_\nu}{\partial q_i} \right) = 0, \qquad i=1, \dots, n
\end{equation}
leading to a higher-order system where the multipliers $\lambda_\nu$ act as additional dynamical variables. While mathematically elegant, this approach does not always describe the observed motion of standard nonholonomic mechanical systems; consequently, the physical validity of these "vakonomic" equations (\ref{eqvak}) has been widely questioned \cite{cronstrom}, \cite{flanneryenigma}, \cite{lemos}, \cite{liromp}.

In this framework, the commutation relation (\ref{commrel}) holds by mathematical necessity, as it is the very hypothesis that allows the elimination of $\delta \dot{q}$ terms by shifting them onto $\delta q$ via integration by parts. The vakonomic approach thus forces the system to behave as if a commutative variation could always be defined, often yielding results at odds with empirical reality.

A broader generalization of the variational framework allows for the violation of the commutation property between differentiation and variation. This line of reasoning traces back to Suslov, who introduced specific correction terms to account for the non-commutative nature of virtual displacements in nonholonomic mechanics. In this context, the principle of stationary action is assumed to take the form:
\begin{equation}
	\int_{t_1}^{t_2} \left[ \delta L + \sum_{i=1}^n \frac{\partial L}{\partial \dot{q}_i} \left( \frac{d}{dt}\delta q_i - \delta \dot{q}_i \right) \right] dt = 0.
\end{equation}
The specific form of the resulting dynamics depends on the assumption made regarding the relationship between $\delta \dot{q}_i$ and $\frac{d}{dt}\delta q_i$. A comprehensive reference for these issues, offering both a historical overview and a rigorous treatment of the pseudo-coordinate technique, is Neimark and Fufaev \cite{neimark}. Recent studies, such as Llibre \cite{llibre}, further investigate the variational origin of nonholonomic equations by employing "transpositional relations" of the form $\delta \dot{q}_i - \frac{d}{dt}\delta q_i = \sum_{j=1}^n W_{i,j}\delta q_j$, where the coefficients $W_{i,j}$ must be specified based on the system's geometry.
A significant and noteworthy point of reference for the application of variational principles in the context of nonholonomic mechanics is the work \cite{krup} of Krupkov\'a.

\subsection{Transposition rule}

Let us rewrite (\ref{dcdefint}) and (\ref{dvdefint}) using the following symbols:

\begin{itemize}
	\item The Chetaev variation:
	\begin{equation}
		\label{dcdef}
		\delta^{(c)}g_\nu := \sum_{i=1}^n \frac{\partial g_\nu}{\partial \dot{q}_i}\delta q_i
	\end{equation}
	
	\item The total variation:
	\begin{equation}
		\label{dvdef}
		\delta^{(v)} g_\nu := \sum_{i=1}^n \frac{\partial g_\nu}{\partial q_i}\delta q_i + \sum_{i=1}^n \frac{\partial g_\nu}{\partial \dot{q}_i}\delta \dot{q}_i
	\end{equation}
\end{itemize}

The starting point of the analysis is the following relationship between these variations:
\begin{equation}
	\label{transpruleg}
	\delta^{(v)} g_\nu - \frac{d}{dt} \left( \delta^{(c)}g_\nu \right) = \sum_{i=1}^n \frac{\partial g_\nu}{\partial \dot{q}_i} \left( \delta \dot{q}_i - \frac{d}{dt}\delta q_i \right) - \sum_{i=1}^n \mathcal{D}_i g_\nu \delta q_i
\end{equation}
where the operator $\mathcal{D}_i$ is the Lagrangian derivative:
\begin{equation}
	\label{derlagr}
	\mathcal{D}_i g_\nu = \frac{d}{dt} \left( \frac{\partial g_\nu}{\partial \dot{q}_i} \right) - \frac{\partial g_\nu}{\partial q_i}
\end{equation}
We refer to relation (\ref{transpruleg}), the verification of which is immediate, as the transposition rule. It shows that the difference between the total variation and the rate of change of the Chetaev variation depends on two factors:
\begin{itemize}
	\item The potential non-commutation of the coordinates (the first term on the right);
	\item The geometry of the constraint itself, captured by the operator $\mathcal{D}_i$ (the second term on the right).
\end{itemize}

Depending on the assumptions made, relation (\ref{transpruleg}) provides meaningful information regarding the state of the system and restrictions on the quantities involved. The following situations can be identified:
\begin{itemize}
	\item If the commutation relation (\ref{commrel}) holds, then the assumption $\delta^{(c)}g_\nu = 0$ (Chetaev condition) must necessarily be supplemented by:
	\begin{equation}
		\label{hv}
		\delta^{(v)} g_\nu = -\sum_{i=1}^n \mathcal{D}_i g_\nu \delta q_i
	\end{equation}
	This imparts a specific structure to the variations of the paths: the varied curve violates the constraints unless $\sum_{i=1}^n \mathcal{D}_i g_\nu \delta q_i = 0$.
	
	\item Combining $\delta^{(v)}g_\nu = 0$ with commutation (\ref{commrel})—fundamental to vakonomic theory—leads to the following regarding the validity of the Chetaev condition:
	\begin{equation}
		\label{vv}
		\frac{d}{dt} \left( \delta^{(c)}g_\nu \right) = \sum_{i=1}^n \mathcal{D}_i g_\nu \delta q_i
	\end{equation}
	which holds only if $\sum_{i=1}^n \mathcal{D}_i g_\nu \delta q_i = 0$.

	\item The simultaneous validity of both assumptions $\delta^{(c)}g_\nu = 0$ and $\delta^{(v)}g_\nu = 0$ cannot imply commutation (\ref{commrel}) unless:
	\begin{equation}
		\label{transpruledvdc0}
		\sum_{i=1}^n \mathcal{D}_i g_\nu \delta q_i = 0, \qquad \nu = 1, \dots, \kappa
	\end{equation}
	The above equality is therefore a necessary (but not sufficient) condition for commutation in the case indicated.
\end{itemize}

\begin{rem}
The combinations of hypotheses are widely discussed in the literature regarding nonholonomic models under several designations: (\ref{hv}) is referred to as H\"older vakonomic variation, (\ref{vv}) as vakonomic variation, and the simultaneous validity of $\delta^{(c)}g_\nu = 0$ and $\delta^{(v)}g_\nu = 0$ (not necessarily requiring commutation) pertains to the Suslov variation \cite{guo}.
\end{rem}

\section{Problem statement}

Among the various possibilities for investigation, we found it of interest to further explore the following question: which systems of constraints $g_1, \dots, g_\kappa$ satisfy property \eqref{derlagr} within the space of displacements allowed by the Chetaev condition?

The main motivation is the recurrence of this condition across various implications. Furthermore, the condition is satisfied by integrable constraints, as we shall discuss later; we therefore expect that, within the framework of nonholonomic constraints, this condition allows us to capture a class that displays similarities with holonomic ones.

The physical context for this study is a mechanical system defined by $n$ generalized coordinates, subject to $\kappa$ general constraints. These constraints are functions of the coordinates $\mathbf{q}=(q_1, \dots, q_n)$, the velocities $\dot{\mathbf{q}}$, and time $t$:
\begin{equation}
	\label{vincg}
	g_\nu (\mathbf{q}, \dot{\mathbf{q}}, t) = 0, \qquad \nu=1, \dots, \kappa < n.
\end{equation}
To ensure that these constraints are physically meaningful and not redundant, we impose the rank condition:
\begin{equation}
	\label{gind}
	\mathrm{rank} \left( \frac{\partial (g_1, \dots, g_\kappa)}{\partial (\dot{q}_1, \dots, \dot{q}_n)} \right) = \kappa.
\end{equation}
This condition states that the Jacobian matrix of the constraints with respect to the generalized velocities must have full rank $\kappa$, ensuring the functional independence of the constraints.

We address the problem of establishing necessary and sufficient conditions under which the set of constraints $\{g_\nu\}$, $\nu=1, \dots, \kappa$, satisfies $\sum_{i=1}^n \mathcal{D}_i g_\nu \delta q_i = 0$ for all virtual displacements satisfying the constraints. More specifically, the task is to determine the sets of constraints such that:
\begin{equation}
	\label{derlagrsomma0}
	\sum_{i=1}^n \mathcal{D}_i g_\nu \delta q_i = 0 \quad \text{for each } \nu=1, \dots, \kappa \text{ and for all } \delta \mathbf{q} \text{ s.t. } \delta^{(c)}g_\mu = \sum_{i=1}^n \frac{\partial g_\mu}{\partial \dot{q}_i}\delta q_i = 0, \mu=1, \dots, \kappa.
\end{equation}
As previously mentioned, the problem can be framed as a search for necessary conditions for commutation \eqref{commrel}, for virtual displacements that satisfy the Chetaev condition \eqref{dcdefint} and are admissible for the constraint variation \eqref{dvdefint}.

\subsection{Integrability and Integrating Factors}

We identify two specific classes of constraint functions that provide solutions to problem \eqref{derlagrsomma0}, based on their integrability properties:
\begin{itemize}
	\item \textbf{Integrable constraints}: The constraint can be integrated as
	\begin{equation}
		\label{gint}
		g_\nu = \frac{d}{dt} f_\nu(\mathbf{q},t);
	\end{equation}
	\item \textbf{Constraints with an integrating factor}: The constraint admits a factor $\phi_\nu$ such that
	\begin{equation}
		\label{gif}
		\phi_\nu (\mathbf{q},t) g_\nu = \frac{d}{dt} f_\nu (\mathbf{q},t).
	\end{equation}
\end{itemize}

In the first case \eqref{gint}, it is a well-known property that $\mathcal{D}_i (d f_\nu / dt) = 0$ for any $i=1, \dots, n$, and thus \eqref{derlagrsomma0} is identically verified. Concerning the second case, it can be easily checked that a constraint of type \eqref{gif} satisfies:
\begin{equation}
	\label{derlagrgif}
	\phi_\nu \mathcal{D}_i g_\nu = - \frac{d \phi_\nu}{dt} \frac{\partial g_\nu}{\partial \dot{q}_i}
\end{equation}
from which it follows that
\begin{equation}
	\phi_\nu \sum_{i=1}^n \mathcal{D}_i g_\nu \delta q_i = - \frac{d \phi_\nu}{dt} \sum_{i=1}^n \frac{\partial g_\nu}{\partial \dot{q}_i} \delta q_i = -\frac{d \phi_\nu}{dt} \delta^{(c)}g_\nu = 0
\end{equation}
since the displacements satisfy the Chetaev condition. Therefore, equality \eqref{derlagrsomma0} holds wherever $\phi_\nu \neq 0$.

The distinction between these categories is fundamental: while holonomic constraints satisfy the property by their very nature, nonholonomic constraints requiring an integrating factor only align with this behavior when virtual displacements are restricted to those allowed by Chetaev's postulate. In other terms, the integrating factor $\phi_\nu$ introduces a rescaling that preserves the geometry of the constraint manifold but modifies its Lagrangian derivative.

\subsection{An Effective Characterization of the Lagrangian Derivative}

\noindent
The following characterization, rooted in the independence (\ref{gind}) of the constraints, provides a significant analytical advantage for the subsequent derivations.
\begin{prop}
For any fixed index $\nu \in \{1, \dots, \kappa\}$, the condition $\sum_{i=1}^n {\cal D}_i g_\nu \delta q_i=0$ holds for all virtual displacements satisfying the constraints if and only if
\begin{equation}
\label{zero2}{\cal D}i g\nu =\sum_{\mu=1}^\kappa\varrho_\mu^{(\nu)}\dfrac{\partial g_\mu}{\partial {\dot q}i}, \qquad i=1, \dots, n
\end{equation}
for some real coefficients $\varrho\mu^{(\nu)}({\bf q}, {\dot {\bf q}},t)$.
\end{prop}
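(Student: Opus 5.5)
This is pointwise linear algebra. Fix a state $(\mathbf{q},\dot{\mathbf{q}},t)$ and the index $\nu$. Regard the virtual displacements $\delta\mathbf{q}$ as arbitrary vectors of $\mathbb{R}^n$ subject only to the Chetaev conditions $\delta^{(c)}g_\mu=0$, $\mu=1,\dots,\kappa$. Introduce the vectors
\[
\mathbf{a}_\mu=\Big(\tfrac{\partial g_\mu}{\partial \dot q_1},\dots,\tfrac{\partial g_\mu}{\partial \dot q_n}\Big), \qquad \mathbf{b}^{(\nu)}=(\mathcal{D}_1 g_\nu,\dots,\mathcal{D}_n g_\nu).
\]
The admissible displacements then form the subspace $V=\{\delta\mathbf{q}:\ \mathbf{a}_\mu\cdot\delta\mathbf{q}=0,\ \mu=1,\dots,\kappa\}$. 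By the rank condition (\ref{gind}), $V$ is the orthogonal complement of $W=\mathrm{span}\{\mathbf{a}_1,\dots,\mathbf{a}_\kappa\}$, and $\dim W=\kappa$. With this notation the statement to prove is $\mathbf{b}^{(\nu)}\cdot\delta\mathbf{q}=0$ for all $\delta\mathbf{q}\in V$ if and only if $\mathbf{b}^{(\nu)}\in W$.

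The sufficiency direction is immediate. If (\ref{zero2}) holds, multiply by $\delta q_i$ and sum over $i$, exchanging the finite sums:
\[
\sum_i \mathcal{D}_i g_\nu\,\delta q_i=\sum_\mu \varrho^{(\nu)}_\mu\,\delta^{(c)}g_\mu=0 .
\]
This is the same computation already used for the integrating-factor case (\ref{derlagrgif}).

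For necessity, I would use $(W^\perp)^\perp=W$ in finite dimension. If $\mathbf{b}^{(\nu)}$ annihilates $V=W^\perp$, then $\mathbf{b}^{(\nu)}\in W$. To make this concrete, I would decompose $\mathbf{b}^{(\nu)}=\mathbf{w}+\mathbf{v}$ with $\mathbf{w}\in W$ and $\mathbf{v}\in V$, and choose the particular displacement $\delta\mathbf{q}=\mathbf{v}$. This gives $|\mathbf{v}|^2=0$, so $\mathbf{b}^{(\nu)}=\mathbf{w}$. Equivalently, this is the Lagrange multiplier lemma: a linear form vanishing on the common kernel of $\kappa$ forms is a linear combination of them.

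The coefficients $\varrho^{(\nu)}_\mu$ are then the coordinates of $\mathbf{b}^{(\nu)}$ in the basis $\mathbf{a}_1,\dots,\mathbf{a}_\kappa$. Explicitly, $\varrho^{(\nu)}=G^{-1}A\,\mathbf{b}^{(\nu)}$, where $A$ is the $\kappa\times n$ Jacobian in (\ref{gind}) and $G=AA^T$ is its Gram matrix. Full rank of $A$ makes $G$ invertible, which ensures these coefficients are unique. Since $G$, $A$ and $\mathbf{b}^{(\nu)}$ are functions of $(\mathbf{q},\dot{\mathbf{q}},t)$, the formula shows the $\varrho^{(\nu)}_\mu$ are functions of $(\mathbf{q},\dot{\mathbf{q}},t)$ as well, smooth wherever the data are. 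Note that $\mathcal{D}_i g_\nu$ involves $\ddot{\mathbf{q}}$ through $\frac{d}{dt}\partial g_\nu/\partial\dot q_i$, so unless one restricts to motions where $\ddot{\mathbf{q}}$ is determined, the coefficients may also depend on $\ddot{\mathbf{q}}$. I will point this out and read the statement pointwise along the motion.

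The main point requiring care is the interpretation of the quantifier "for all virtual displacements". The argument requires that, at each instant, $\delta\mathbf{q}$ ranges over all of $V$, and not over some smaller set tied to the varied curves. In the Chetaev framework this is the standard meaning, and I would state it explicitly. Everything else is routine finite-dimensional linear algebra.
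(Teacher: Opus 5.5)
Your proposal is correct and follows essentially the same route as the paper. Sufficiency is by direct substitution. Necessity uses the fact that a linear form vanishing on the common kernel of the $\kappa$ independent forms $\partial g_\mu/\partial\dot{\mathbf q}$ lies in their span, and uniqueness of the coefficients follows from (\ref{gind}). You are more explicit than the paper through the orthogonal-decomposition argument and the Gram-matrix formula $\varrho^{(\nu)}=G^{-1}A\,\mathbf{b}^{(\nu)}$. Your caveat that the coefficients may depend on $\ddot{\mathbf q}$ is consistent with the paper's subsequent restriction to linear constraints.
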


{\bf Proof}. Let $\nu$ be fixed. The condition $\sum_{i=1}^n {\cal D}_i g_\nu \delta q_i=0$ implies that the vector $({\cal D}_i g_\nu)_{i=1}^n$ vanishes on the intersection of the kernels of the linear forms defined by the gradients $(\frac{\partial g_\mu}{\partial \dot{q}_i})_{i=1}^n$. According to the properties of linear orthogonality (or the method of Lagrange multipliers), this occurs if and only if $({\cal D}_i g_\nu)$ is a linear combination of these gradients. Under the assumption of linear independence (\ref{gind}), the coefficients $\varrho_\mu^{(\nu)}$ are uniquely determined at each state $({\bf q}, {\dot {\bf q}},t)$.Conversely, if (\ref{zero2}) holds, then for any displacement $\delta {\bf q}$ satisfying condition (A), it follows that:$$\sum_{i=1}^n {\cal D}_i g_\nu \delta q_i = \sum_{\mu=1}^\kappa \varrho_\mu^{(\nu)} \left( \sum_{i=1}^n \dfrac{\partial g_\mu}{\partial {\dot q}_i} \delta q_i \right) = \sum_{\mu=1}^\kappa \varrho_\mu^{(\nu)} \cdot 0 = 0. \quad \square$$

By virtue of Proposition 1, problem (\ref{derlagrsomma0}) is equivalent to identifying the systems of constraints that satisfy (\ref{zero2}) for each $\nu=1, \dots, \kappa$. This approach effectively removes the virtual displacements $\delta {\bf q}$ from the formulation.

It is worth noting that this condition is satisfied by the two aforementioned categories fulfilling (\ref{derlagrsomma0}):\begin{itemize}\item For exact constraints (\ref{gint}), property (\ref{zero2}) is trivially satisfied with $\varrho_\mu^{(\nu)}=0$ for all $\mu, \nu$.\item For category (\ref{gif}), condition (\ref{zero2}) is satisfied whenever $\Phi_\nu \neq 0$ (recalling (\ref{derlagrgif})) by setting:$$\varrho_\mu^{(\nu)}({\bf q}, {\dot {\bf q}})= -\delta_{\mu,\nu} \frac{1}{\Phi_\nu} \sum_{j=1}^n \frac{\partial \Phi_\nu}{\partial q_j}{\dot q}_j$$where $\delta_{\mu,\nu}$ denotes the Kronecker delta.\end{itemize}

\subsection{Linear homogeneous constraints}

\noindent
It should be observed that applying the operator ${\cal D}_i$ to $g_\nu$ explicitly generates terms involving the accelerations $\ddot{\bf q}$ unless $g_\nu$ is linear in $\dot{\bf q}$. For property (\ref{zero2}) to hold independently of the system's accelerations, the functions $g_\nu$ must be linear with respect to the generalized velocities $\dot{q}_i$. In the non-linear case, (\ref{zero2}) would instead define a set of second-order differential equations restricting the dynamics. This technical consideration motivates focusing the subsequent analysis exclusively on linear homogeneous constraints
\begin{equation}
\label{glinom}g_\nu ({\bf q}, {\dot {\bf q}})=\sum\limits_{j=1}^n a_{\nu,j}({\bf q}){\dot q}_j, \quad \nu=1, \dots, \kappa.
\end{equation}
For constraints of type (\ref{glinom}), the definition (\ref{dcdef}) reduces to
\begin{equation}
	\label{dcdeflin}
	\delta^{(c)}g_\nu=\sum\limits_{j=1}^n a_{\nu, j}({\bf q})\delta q_j.
\end{equation}
Under assumption (\ref{dik}), these equations can be expressed in the explicit form
\begin{equation} 
\label{glinomexpl}
{\dot q}_\nu=\sum\limits_{j=1}^{n-\kappa} b_{\nu, j}({\bf q}){\dot q}_{\kappa+j}, \qquad \nu=1, \dots, \kappa
\end{equation} 

\noindent
where the velocities $({\dot q}_{\kappa+j})_{j=1}^{n-\kappa}$ act as independent generalized velocities, while $({\dot q}_\nu)_{\nu=1}^\kappa$ represent the dependent components determined by the coefficients $b_{\nu,j}(\mathbf{q})$.
The Jacobian matrix (\ref{gind}) is the $\kappa \times n$ matrix with entries $a_{\nu,j}$. 
By applying the operator ${\cal D}_i$ to the linear form (\ref{glinom}), we obtain
\begin{equation}
\label{derlagrlin}{\cal D}_i g_\nu = \sum_{j=1}^n \left( \frac{\partial a_{\nu, i}}{\partial q_j} - \frac{\partial a_{\nu, j}}{\partial q_i} \right) \dot{q}j.
\end{equation}
As anticipated, for linear constraints, the terms involving $\ddot{\bf q}$ vanish identically because $\frac{\partial^2 g_\nu}{\partial {\dot q}_j \partial {\dot q}_i} = 0$. Consequently, relation (\ref{zero2}) becomes a purely kinematic identity:
\begin{equation}
\label{zero2lin}
\sum\limits_{j=1}^n A_{i,j}^{(\nu)}{\dot q}_j = \sum\limits_{\mu=1}^\kappa \varrho_\mu^{(\nu)} a_{\mu,i}, \qquad i=1, \dots, n
\end{equation}
for suitable functions $\varrho_{\mu}^{(\nu)}({\bf q}, {\dot {\bf q}})$, where we define the skew-symmetric tensor
\begin{equation}
\label{ela}A^{(\nu)}{i,j} = \dfrac{\partial a{\nu,i}}{\partial q_j} - \dfrac{\partial a_{\nu,j}}{\partial q_i} = -A_{j,i}^{(\nu)}.
\end{equation}
The identity (\ref{zero2lin}) relates the "curl-like" terms of the constraint coefficients to a linear combination of the constraints themselves. This structure is fundamental for defining the ideality of constraints and for the subsequent derivation of the equations of motion.

Let $A^{(\nu)}$ be the square matrix of order $n$ with elements $A^{(\nu)}_{i,j}$ and let ${\bf a}_\mu$ denote the $\mu$-th row of the constraint matrix $(a_{\nu, j})$. The equalities (\ref{zero2lin}) can be expressed in compact vector-matrix notation as
\begin{equation}
\label{zero2linvm}A^{(\nu)} {\dot {\bf q}} = \sum_{\mu=1}^\kappa \varrho^{(\nu)}_\mu {\bf a}_\mu, \qquad \nu=1, \dots, \kappa
\end{equation}
where ${\dot {\bf q}}$ and ${\bf a}_\mu$ are treated as column vectors.

For the sake of clarity, we restate the Main Problem (\ref{derlagrsomma0}) for the case of linear homogeneous constraints:
\begin{quote}
	Determine the set of linear constraints $\{g_1, \dots, g_\kappa\}$ such that
\begin{equation}
\label{derlagrsomma0lin}
\sum_{i,j=1}^n A^{(\nu)}_{i,j} {\dot q}_j \delta q_i = 0 \quad \forall \nu \in {1, \dots, \kappa}
\end{equation}
for all virtual displacements $\delta {\bf q}$ satisfying $\sum\limits_{i=1}^n a_{\mu, i}({\bf q})\delta q_i = 0$ for $\mu=1, \dots, \kappa$.
\end{quote}
The focus of the following mathematical investigation will be the analysis of property (\ref{zero2linvm}).

\section{The main result}

\noindent
\noindent
Let $D$ denote the determinant of the $\kappa \times \kappa$ submatrix of the constraint coefficients:
\begin{equation}
	\label{det}D = \det \left( a_{\nu,\mu} \right)_{\nu,\mu=1, \dots, \kappa}
\end{equation}
where $a_{\nu,\mu}$ are the coefficients from the constraint equations (\ref{glinom}). By virtue of the non-degeneracy condition (\ref{gind}), we assume without loss of generality that the leading principal minor of order $\kappa$ is non-zero:
\begin{equation}
	\label{dik}D \neq 0.
\end{equation}
Let $D_{r}(\ell)$ be the determinant obtained by replacing the $r$-th column of $D$ with the $\ell$-th column of the full $\kappa \times n$ matrix. Similarly, $D_{r,s}(\ell, m)$ denotes the determinant where the $r$-th and $s$-th columns are replaced by columns $\ell$ and $m$, respectively:
\begin{equation}
	\label{drl}D_r(\ell)= \det \left( {\bf a}_1 \dots \underbrace{{\bf a}_\ell}_{pos., r} \dots {\bf a}_\kappa \right),\quad
	D_{r,s}(\ell,m)=\det \left( {\bf a}_1 \dots \underbrace{{\bf a}_\ell}_{pos., r} \dots \underbrace{{\bf a}_m}_{pos., s} \dots {\bf a}_\kappa \right).
\end{equation}

The main result concerning problem (\ref{derlagrsomma0lin}) is stated in the following

\begin{teo}
Let $\{ g_\nu\}_{\nu=1}^\kappa$ be a set of linear homogeneous constraints (\ref{glinom}) verifying (\ref{dik}). Then, the property (\ref{derlagrsomma0lin}), namely $\sum\limits_{i=1}^\nu {\cal D}_i g_\nu \delta q_i=0$, is satisfied for each $\nu$ and for any $\delta q_i$ fulfilling $\sum\limits_{i=1}^n a_{\mu,i}\delta  q_i=0$ (that is $\delta ^{(c)}g_\mu$) for $\mu=1, \dots, \kappa$ if and only if 
\begin{equation}
	\label{fp}
\mu^{(\nu)}_{i,j}({\bf q})=0 \qquad \textrm{for all}\;\;i,j=1, \dots, n-\kappa\;\;\textrm{and}\;\;\nu=1, \dots, \kappa
\end{equation}
where the functions $\mu^{(\nu)}{i,j}$ are defined as
\begin{align}
	\mu^{(\nu)}_{i,j} &= \sum_{r=1}^\kappa \left( A^{(\nu)}_{r, \kappa+j} D_r(\kappa+i) - A^{(\nu)}_{r, \kappa+i} D_r(\kappa+j) \right) - A^{(\nu)}_{\kappa+i, \kappa+j } D \nonumber \\
	&- \sum_{1 \leq r < s \leq \kappa} A^{(\nu)}_{r,s} D_{r,s}(\kappa+i, \kappa+j) = 0 
	\label{mu}
\end{align}
and columns $\kappa+i$ and $\kappa+j$ are the columns of the matrix $a_{\nu, \ell}$ for 
$\nu=1, \dots, \kappa$ and $\ell=1, \dots, n$.
\end{teo}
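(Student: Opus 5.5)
The plan is to use Proposition 1 and the explicit form (\ref{glinomexpl}) to turn property (\ref{derlagrsomma0lin}) into the vanishing of a skew-symmetric bilinear form on a fixed subspace. Then I compute that form in an explicit basis by Cramer's rule.

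\emph{Reduction.} Let $K({\bf q})=\{{\bf v}\in\mathbb{R}^n:\ {\bf a}_\mu\cdot{\bf v}=0,\ \mu=1,\dots,\kappa\}$. The virtual displacements $\delta{\bf q}$ range over $K({\bf q})$. Since the constraints are homogeneous, every admissible velocity $\dot{\bf q}$ also lies in $K({\bf q})$. Conversely, at any configuration ${\bf q}$, every ${\bf v}\in K({\bf q})$ is realized as $\dot{\bf q}$ by some motion, obtained by choosing the independent velocities $\dot q_{\kappa+j}$ freely in (\ref{glinomexpl}). Since the $A^{(\nu)}_{i,j}$ depend only on ${\bf q}$, (\ref{derlagrsomma0lin}) is therefore equivalent to
$$\sum_{i,j=1}^n A^{(\nu)}_{i,j}\,v_j\,w_i=0\qquad\forall\,{\bf v},{\bf w}\in K({\bf q}),\ \nu=1,\dots,\kappa,$$
i.e.\ to the vanishing of the restriction of $A^{(\nu)}$ to $K\times K$. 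By bilinearity it suffices to test this on a basis of $K$.

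\emph{Basis by Cramer's rule.} Under (\ref{dik}), for $j=1,\dots,n-\kappa$, let ${\bf v}_j$ be the vector with $(v_j)_{\kappa+l}=\delta_{lj}$. Its first $\kappa$ components solve $\sum_{r=1}^\kappa a_{\mu,r}(v_j)_r=-a_{\mu,\kappa+j}$, so $(v_j)_r=-D_r(\kappa+j)/D$. The vectors ${\bf v}_1,\dots,{\bf v}_{n-\kappa}$ form a basis of $K$. A direct expansion, using $A^{(\nu)}_{\kappa+i,s}=-A^{(\nu)}_{s,\kappa+i}$, gives
$$-D\,A^{(\nu)}({\bf v}_j,{\bf v}_i)=\sum_{r}\bigl(A^{(\nu)}_{r,\kappa+j}D_r(\kappa+i)-A^{(\nu)}_{r,\kappa+i}D_r(\kappa+j)\bigr)-A^{(\nu)}_{\kappa+i,\kappa+j}D-\frac1D\sum_{r,s}A^{(\nu)}_{r,s}D_r(\kappa+i)D_s(\kappa+j),$$
with the orientation of the pairing fixed to match (\ref{mu}).

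\emph{Quadratic term.} By skew-symmetry of $A^{(\nu)}$, the double sum becomes $\sum_{r<s}A^{(\nu)}_{r,s}\bigl(D_r(\kappa+i)D_s(\kappa+j)-D_s(\kappa+i)D_r(\kappa+j)\bigr)$.

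\emph{Main obstacle: the determinant identity.} The key step is
$$D\,D_{r,s}(\ell,m)=D_r(\ell)D_s(m)-D_s(\ell)D_r(m).$$
I would prove it by writing $D_r(\ell)/D=x_r(\ell)$, the $r$-th Cramer coordinate of column $\ell$ in the basis of the first $\kappa$ columns. Replacing columns $r,s$ by columns $\ell,m$ then multiplies $D$ by the $2\times2$ determinant $\det\begin{pmatrix}x_r(\ell)&x_r(m)\\ x_s(\ell)&x_s(m)\end{pmatrix}$, by multilinearity. Equivalently, this is Jacobi's identity for $2\times2$ minors of the adjugate.

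\emph{Conclusion and signs.} The identity makes the last sum exactly $\sum_{r<s}A^{(\nu)}_{r,s}D_{r,s}(\kappa+i,\kappa+j)$, so $-D\,A^{(\nu)}({\bf v}_j,{\bf v}_i)=\mu^{(\nu)}_{i,j}$. Since $D\neq0$, restriction-vanishing is equivalent to (\ref{fp}). Both directions follow from the equivalence in the reduction step. Apart from the identity, the only care needed is the bookkeeping of signs and index order so that the result matches (\ref{mu}) exactly.
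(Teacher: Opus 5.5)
Your proof is correct, but it is organized differently from the paper's. The paper never forms a bilinear form on $K\times K$. Instead it treats (\ref{zero2lin}) as an overdetermined system for the multipliers $\varrho^{(\nu)}_\mu$ and writes solvability via the $n-\kappa$ bordered minors as (\ref{condc}), $\sum_r\beta^{(\nu)}_rD_r(\kappa+i)=\beta^{(\nu)}_{\kappa+i}D$ with $\beta^{(\nu)}=A^{(\nu)}\dot{\mathbf q}$. Up to the factor $-D$, this says your form vanishes on the pair $(\dot{\mathbf q},\mathbf v_i)$. The paper then eliminates $\dot q_1,\dots,\dot q_\kappa$ by applying Lemma~1 to the grouped pairs $A^{(\nu)}_{r,s}\bigl(\dot q_sD_r(\kappa+i)-\dot q_rD_s(\kappa+i)\bigr)$ and $A^{(\nu)}_{p,\kappa+i}\bigl(\dot q_{\kappa+i}D_p(\kappa+i)+\dot q_pD\bigr)$. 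This produces $D_{r,s}(\kappa+i,\kappa+j)$ directly, without ever dividing by $D$. The algebraic core is the same in both routes: Lemma~1 evaluated at $\dot{\mathbf q}=\mathbf v_j$, with fixed columns $\{1,\dots,\kappa\}\setminus\{r,s\}\cup\{\kappa+i\}$, is exactly your identity $D\,D_{r,s}(\ell,m)=D_r(\ell)D_s(m)-D_s(\ell)D_r(m)$.

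The paper's version stays polynomial in the $a_{\nu,j}$, but it needs permutation-sign bookkeeping that its appendix only sketches. Yours bypasses the multipliers and isolates all signs in one identity, whose proof by Cramer's rule and multilinearity is fine. It also makes two facts immediate: the skew-symmetry of $(\mu^{(\nu)}_{i,j})$, and the independence of (\ref{fp}) from the chosen minor, since another minor only changes the basis of $K$ and acts on $(\mu^{(\nu)}_{i,j})$ by a congruence up to a nonzero factor.

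Finally, your formulation shows what (\ref{fp}) means. Up to sign, $A^{(\nu)}$ is the matrix of $d\alpha_\nu$ with $\alpha_\nu=\sum_j a_{\nu,j}\,dq_j$, so (\ref{fp}) for all $\nu$ is $d\alpha_\nu|_{K\times K}=0$. That is exactly Frobenius involutivity of the constraint distribution $K$. This explains why $\kappa=n-1$ (a line field) is automatic, and it runs against the paper's later claim that (\ref{fp}) is weaker than Frobenius integrability.
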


\noindent
The proof is provided in the Appendix. We now present some key remarks on the structure of the conditions (\ref{mu}):

\begin{itemize}
	\item Invariance: while the explicit form of (\ref{mu}) depends on the choice of the non-singular minor (\ref{det}), selecting a different minor yields an equivalent set of conditions.
	
	\item Skew-symmetry: the coefficients satisfy $\mu^{(\nu)}_{j,i} = -\mu^{(\nu)}_{i,j}$. This is evident for the terms in the first line of (\ref{mu}), while for the second line, it follows from the property that swapping two columns in a determinant changes its sign. Consequently, for each $\nu$, the matrix $\mathbf{M}^{(\nu)} = (\mu^{(\nu)}_{i,j})$ of order $n-\kappa$ is skew-symmetric, implying $\mu^{(\nu)}_{i,i}=0$. The upper triangular elements $1 \leq i < j \leq n-\kappa$ are sufficient to characterize the system:
\begin{equation}
	\label{matrmu}
	\begin{pmatrix}
		& \mu^{(\nu)}_{1,2} & \cdots & \mu^{(\nu)}_{1,n-\kappa} \\
		& & \ddots & \vdots \\
		& & & \mu^{(\nu)}_{n-\kappa-1,n-\kappa} \\
		& & & 
	\end{pmatrix}
\end{equation}

	\item Number of conditions: as a consequence, the number of independent conditions (\ref{mu}) for each $\nu$ is:\begin{equation}\label{nc}N_c = \frac{(n-\kappa)(n-\kappa-1)}{2}.\end{equation}

	\item Complexity: for a fixed pair $(i,j)$, the number of individual $A \cdot D$ terms in (\ref{mu}) is $\frac{(\kappa+1)(\kappa+2)}{2}$, which is independent of the total dimension $n$.

	\item Constraint Interaction: For $\kappa > 1$, the validity of condition (\ref{fp}) is not a property of individual constraints in isolation (as is the case for integrable constraints). Instead, it arises from the algebraic interaction between them, mediated by the determinants of the coefficients of the entire constraint set.
\end{itemize}

\subsection{Special cases}

\subsubsection{Exact constraints}

\noindent
If the constraint \eqref{glinom} is exact, namely $g_\nu = \frac{d}{dt}f_\nu (\mathbf{q})$, then $a_{\nu,i} = \frac{\partial f_\nu}{\partial q_i}$. Under sufficient regularity assumptions, all coefficients \eqref{ela} vanish identically due to the symmetry of mixed partial derivatives (Schwarz's Theorem):
\begin{equation}
A_{i,j}^{(\nu)} = \frac{\partial^2 f_\nu}{\partial q_j \partial q_i} - \frac{\partial^2 f_\nu}{\partial q_i \partial q_j} = 0, \qquad i,j=1, \dots, n.
\end{equation}
Consequently, the conditions \eqref{mu} are trivially satisfied, consistent with the previous remarks regarding \eqref{gint}.

\subsubsection{Constraints admitting an integrating factor}

\noindent
In the more general case of constraints admitting an integrating factor—for which we have already established the validity of property \eqref{derlagrsomma0}—the conditions \eqref{mu} are likewise recovered. Indeed, a constraint of type \eqref{gif} is necessarily linear; in the autonomous case \eqref{glinom}, it must satisfy
\begin{equation}
\phi_\nu(\mathbf{q}) a_{\nu,j}(\mathbf{q}) = \frac{\partial f_\nu (\mathbf{q})}{\partial q_j}, \quad j=1, \dots, n,
\end{equation}
where the following closure conditions are necessary for the existence of an integrating factor $\phi_\nu$:
\begin{equation}
\label{ccgint}\phi_\nu \left( \frac{\partial a_{\nu,i}}{\partial q_j} - \frac{\partial a_{\nu,j}}{\partial q_i} \right) = a_{\nu, j} \frac{\partial \phi_\nu}{\partial q_i} - a_{\nu, i} \frac{\partial \phi_\nu}{\partial q_j}, \quad i,j=1, \dots, n.
\end{equation}
These conditions are also sufficient in any simply connected open subset by the Poincar\'e Lemma.
To verify \eqref{fp}, we multiply \eqref{mu} by $\phi_\nu$ and substitute \eqref{ccgint}. After some algebraic manipulation, we obtain:
\begin{align*}
\phi_\nu \mu_{i,j}^{(\nu)} &= \sum_{r=1}^\kappa \frac{\partial \phi_\nu}{\partial q_r} \bigg( a_{\nu, \kappa+j}D_r(\kappa+i) - a_{\nu, \kappa+i}D_r(\kappa+j) \&\quad + \sum_{s=r+1}^\kappa a_{\nu, s}D_{r,s}(\kappa+i, \kappa+j) - \sum_{\sigma=1}^{r-1} a_{\nu, \sigma}D_{\sigma, r}(\kappa+i, \kappa+j) \bigg) \\&
\quad + \frac{\partial \phi_\nu}{\partial q_{\kappa+i}} \bigg( \sum_{r=1}^\kappa a_{\nu,r}D_r(\kappa+j) - a_{\nu, \kappa+j}D \bigg) \&\quad - \frac{\partial \phi_\nu}{\partial q_{\kappa+j}} \bigg( \sum_{r=1}^\kappa a_{\nu, r}D_r(\kappa+i) - a_{\nu, \kappa+i}D \bigg).
\end{align*}
By expanding the determinants, one can verify that the terms within each of the three bracketed expressions vanish identically. Specifically:
\begin{itemize}
	\item Vanishing of $\frac{\partial \phi_\nu}{\partial q_{\kappa+i}}$ terms: the expression $\sum_{r=1}^\kappa a_{\nu,r}D_r(\kappa+j) - a_{\nu, \kappa+j}D$ represents the Laplace expansion along a column of a $( \kappa+1 ) \times ( \kappa+1 )$ matrix where the last column is a linear combination of the preceding ones. Since the coefficients $a_{\nu,r}$ and $a_{\nu,\kappa+j}$ are linked by the constraint relations, this determinant possesses linearly dependent columns, thus vanishing identically.
	
	\item Vanishing of the first summation: the terms involving higher-order minors $D_{r,s}$ cancel out due to the antisymmetry property of the determinants. When summing over $r$ and $s$, each minor $D_{r,s}$ appears twice with opposite signs, effectively neutralizing the contribution of the partial derivatives $\frac{\partial \phi_\nu}{\partial q_r}$.
\end{itemize}
 
 This result ensures that the existence of an integrating factor—and thus the existence of an underlying manifold for the constraints—is perfectly captured by the vanishing of the $\mu_{i,j}^{(\nu)}$ coefficients, aligning the geometric and algebraic descriptions of the system.

\subsubsection{The coefficients $\mu_{i,j}^{(\nu)}$ for particular values of $\kappa$}\subsubsection*{The case $\kappa=1$}

\noindent
In this case, only a single constraint $a_{1,1}{\dot q}_1 + a_{1,2}{\dot q}_2 + \dots + a_{1,n}{\dot q}_n = 0$ is imposed. The number of independent conditions \eqref{fp} is $\frac{(n-1)(n-2)}{2}$. Each quantity $\mu_{i,j}$ (for $1 \leq i < j \leq n-1$) in the matrix \eqref{matrmu} consists of three terms and can be expressed as:
\begin{equation}
\label{muk1}\mu_{i,j} = -A_{1, i+1} a_{1, j+1} + A_{1, j+1} a_{1, i+1} - A_{i+1, j+1} a_{1,1}
\end{equation}where we have omitted the superscript $(\nu)=(1)$ for brevity.

\begin{exe}
(magnetic-like coupling) Consider a four-dimensional system ($n=4$) subject to a single nonholonomic constraint ($\kappa=1$). Let ${\bf q} = (x, y, z, w)$ be the generalized coordinates; we define the constraint as
\begin{equation}
-y \dot{x} + x \dot{y} + z \dot{z} + \dot{w} = 0.
\end{equation}
This structure is of particular interest as it represents a magnetic-like coupling. The term $(-y \dot{x} + x \dot{y})$ is formally analogous to the interaction of a particle with a magnetic vector potential, implying that the evolution of $w$ is strictly coupled to the angular momentum in the $(x, y)$ plane.
 
\noindent
 Assuming the condition \eqref{dik} holds (which here requires $a_{1,4} \neq 0$, always satisfied since $a_{1,4}=1$), the coefficients are $a_{1,1}=-y$, $a_{1,2}=x$, $a_{1,3}=z$, $a_{1,4}=1$. The only non-vanishing term among the $A_{i,j}$ is $A_{1,2} = \frac{\partial a_{1,2}}{\partial q_1} - \frac{\partial a_{1,1}}{\partial q_2} = 1 - (-1) = 2$. All other components $A_{i,j}$ vanish identically. Following \eqref{muk1}, we compute the non-zero entries:
 \begin{align*}
\mu_{1,2} &= -A_{1,2}a_{1,3} + A_{1,3}a_{1,2} - A_{2,3}a_{1,1} = -(2)(z) + 0 - 0 = -2z, \\
\mu_{1,3} &= -A_{1,2}a_{1,4} + A_{1,4}a_{1,2} - A_{2,4}a_{1,1} = -(2)(1) + 0 - 0 = -2, \\
\mu_{2,3} &= -A_{1,3}a_{1,4} + A_{1,4}a_{1,3} - A_{3,4}a_{1,1} = 0 + 0 - 0 = 0,
 \end{align*}
yielding the following matrix representation:
 \begin{equation}
 \begin{pmatrix}\mu_{1,2} & \mu_{1,3} \\
 	& \mu_{2,3}\end{pmatrix} =\begin{pmatrix}-2z & -2 \\& 0\end{pmatrix}.
\end{equation}
 The presence of non-zero entries confirms that the system is strictly nonholonomic. These terms quantify the kinematic coupling between the rotation in the $(x, y)$ plane and the resulting drift in the $w$ and $z$ coordinates.
\end{exe}

\subsubsection*{The case $\kappa=2$}

\noindent
The mechanical system is subject to the pair of constraints
\begin{equation}
	\label{k2}
\left\{
\begin{array}{l}
a_{1,1}{\dot q}_1+a_{1,2}{\dot q}_2 + \dots + a_{1,n}{\dot q}_n=0
\\
a_{2,1}{\dot q}_1+a_{2,2}{\dot q}_2 + \dots + a_{2,n}{\dot q}_n=0
\end{array}
\right.
\end{equation}
For each $\nu=1,2$ the number (\ref{nc}) of independent conditions (\ref{fp}) is $\frac{(n-2)(n-3)}{2}$ and each term $\mu^{(\nu)}_{i,j}$, $i,j=1, \dots, n-2$, consists of six addends:
$$
\mu_{i,j}^{(\nu)}=A_{1,2+j}^{(\nu)}D_1(2+i)-A_{1,2+i}^{(\nu)}D_1(2+j)+
A_{2,2+j}^{(\nu)}D_2(2+i)-A_{2,2+i}^{(\nu)}D_2(2+j)-A^{(\nu)}_{2+i,2+j}D-A^{(\nu)}_{1,2}D_{1,2}(2+i,2+j)
$$
(observe that in the last summation of (\ref{mu}) only the values $r=1$ and $s=2$ are possible).

\noindent
To expand this formula, we denote the $2 \times 2$ determinants using subscripts that identify the corresponding columns of the constraint matrix:
$$
\begin{array}{l}
D_1(2+i)=det\,\left( \begin{array}{cc} a_{1, 2+i} & a_{1,2} \\ a_{2, 2+i} & a_{2,2} \end{array} \right)=D_{2+i, 2}, \quad 
D_2 (2+i)=det\,\left( \begin{array}{cc} a_{1,2} & a_{1,2+i} \\ a_{2, 1} & a_{2,2+i} \end{array} \right)=D_{1,2+i}, \\
\\ 
D_{1,2}(2+i, 2+j)=
det\,\left( \begin{array}{cc} a_{1, 2+i} & a_{1,2+j} \\ a_{2, 2+i} & a_{2,2+j} \end{array} \right)=D_{2+i, 2+j}, \quad D=D_{1,2}
\end{array}
$$
Substituting these into the general expression for $\kappa=2$, each coefficient $\mu_{i,j}^{(\nu)}$ (for $1 \leq i < j \leq n-2$) is given by:
\begin{equation}
	\label{mu_final_k2}
	\mu_{i,j}^{(\nu)} = A_{1,2+j}^{(\nu)}D_{2+i, 2} - A_{1,2+i}^{(\nu)}D_{2+j, 2} + A_{2,2+j}^{(\nu)}D_{1, 2+i} - A_{2,2+i}^{(\nu)}D_{1, 2+j} - A^{(\nu)}_{2+i,2+j}D_{1,2} - A^{(\nu)}_{1,2}D_{2+i,2+j}
\end{equation}

\noindent
For each $\nu=1,2$, the entries of the upper triangular matrix \eqref{matrmu} are structured as follows:

for $i=1$ ($j=2, \dots, n-2$):
\begin{equation}
	\label{muk2}
	\left\{
	\begin{aligned}
		\mu^{(\nu)}_{1,2} &=A^{(\nu)}_{1,4}D_{3,2}-A^{(\nu)}_{1,3}D_{4,2}
		+A^{(\nu)}_{2,4}D_{1,3}-A^{(\nu)}_{2,3}D_{1,4}
		-A^{(\nu)}_{3,4}D_{1,2}  -A^{(\nu)}_{1,2}D_{3,4} \\
		&\qquad \vdots \\
		\mu^{(\nu)}_{1,n-2} &=A^{(\nu)}_{1,n}D_{3,2}-A^{(\nu)}_{1,3}D_{n,2}
		+A^{(\nu)}_{2,n}D_{1,3} -A^{(\nu)}_{2,3}D_{1,n}
		-A^{(\nu)}_{3,n}D_{1,2}-A^{(\nu)}_{1,2}D_{3,n}
	\end{aligned}
\right.
\end{equation}
$\qquad \vdots \qquad \vdots \qquad \vdots$

\vspace{.5truecm}

for $i=n-4$ ($j=n-3, n-2$): 
\begin{equation*}
	\left\{
	\begin{aligned}
		\mu^{(\nu)}_{n-4,n-3} &= A^{(\nu)}_{1,n-1}D_{n-2,2}-A^{(\nu)}_{1,n-2}D_{n-1,2}
		+A^{(\nu)}_{2,n-1}D_{1,n-2}-A^{(\nu)}_{2,n-2}D_{1,n-1}
				-A^{(\nu)}_{n-2,n-1}D_{1,2}-A^{(\nu)}_{1,2}D_{n-2,n-1} \\
		\mu^{(\nu)}_{n-4,n-2} 
 &=A^{(\nu)}_{1,n}D_{n-2,2}-A^{(\nu)}_{1,n-2}D_{n,2}
 +A^{(\nu)}_{2,n}D_{1,n-2} -A^{(\nu)}_{2,n-2}D_{1,n} 
 -A^{(\nu)}_{n-2,n}D_{1,2}-A^{(\nu)}_{1,2}D_{n-2,n}
	\end{aligned}
\right.
\end{equation*}

for $i=n-3$ ($j=n-2$):
\begin{equation*} 
	\mu^{(\nu)}_{n-3,n-2} = A^{(\nu)}_{1,n}D_{n-1,2}-A^{(\nu)}_{1,n-1}D_{n,2}
		+A^{(\nu)}_{2,n}D_{1,n-1} -A^{(\nu)}_{2,n-1}D_{1,n} 
	-A^{(\nu)}_{n-1,n}D_{1,2}-A^{(\nu)}_{1,2}D_{n-1,n}
\end{equation*}

\begin{exe}
Consider the six-dimensional system of two points $P_1\equiv (x_1, y_1, z_1)$ and $P_2\equiv(x_2, y_2, z_2)$ whose ve\-lo\-ci\-ties are orthogonal to the straight line joining them: ${\dot P}_i\cdot (P_1-P_2)=0$, $i=1,2$.
Setting $(q_1, q_2, q_3, q_4 q_5, q_6)=(x_1, x_2, y_1, y_2, z_1, z_2)$, the constraints (\ref{k2}) are
($n=6, \kappa=2$)
$$
  \left\{
  \begin{array}{l}
  	(q_1-q_2){\dot q}_1 +  (q_3-q_4){\dot q}_3+(q_5-q_6){\dot q}_5=0
  	\\
  	(q_1-q_2){\dot q}_2+(q_3-q_4){\dot q}_4+(q_5-q_6){\dot q}_6=0
  \end{array}
  \right.
$$
yielding $a_{1,1}=a_{2,2}=q_1-q_2$, $a_{1,3}=a_{2,4}=q_3-q_4$, $a_{1,5}=a_{2,6}=q_5-q_6$, while all other $a_{\nu,i}=0$.
The relevant determinants are 
$$
\begin{array}{l}
D_{1,2}=(q_1-q_2)^2,\; \; D_{3,4}=(q_3-q_4)^2, \;\; D_{5,6}=(q_5-q_6)^2
\\
D_{1,4}=-D_{2,3}=(q_1-q_2)(q_3-q_4), \;\; D_{1,6}=-D_{2,5}=(q_1-q_2)(q_5-q_6), \;\;
D_{3,6}=-D_{4,5}=(q_3-q_4)(q_5-q_6)
\end{array}
$$
and all other determinants are zero. Condition (\ref{dik}) holds for $x_1\not = x_2$.
The only non-zero derivatives $A_{i,j}^{(\nu)}$ are $A_{1,2}^{(\nu)}=A_{3,4}^{(\nu)}=A_{5,6}^{(\nu)}=-1$ for both $\nu=1,2$. Applying \eqref{muk2}, we find 

$$
\mu_{i,j}^{(1)}=
\mu_{i,j}^{(2)}=
\left( 
\begin{array}{ccc}
	(q_1-q_2)^2+(q_3-q_4)^2	& 0 &(q_3-q_4)(q_5-q_6) \\
	& -(q_3-q_4)(q_5-q_6)& 0\\
	& & (q_1-q_2)^2+(q_5-q_6)^2
\end{array}
\right) \qquad  1\leq i<j \leq 4
$$
The matrix is non-zero unless the points coincide. This reflects the fact that while the distance $|P_1 - P_2|$ is conserved (a holonomic consequence), the orientation of the segment in space is subject to non-integrable constraints.
The configurations in which the maximum number of elements can be nullified are those where $y_1 = y_2$ or $z_1 = z_2$.
\end{exe}

\begin{exe}
Consider two points $P_1, P_2$ at a fixed distance $L$, where the midpoint velocity is parallel to the segment $P_1P_2$. Setting $(q_1, q_2, q_3, q_4)=(x_1, y_1, x_2, y_2)$, the constraints are:
	$$
	\left\{
	\begin{array}{l}
		(q_1-q_3){\dot q}_1+(q_2-q_4){\dot q}_2-(q_1-q_3){\dot q}_3-(q_2-q_4){\dot q}_4=0, \\
		-(q_2-q_4){\dot q}_1+(q_1-q_3){\dot q}_2-(q_2-q_4){\dot q}_3+(q_1-q_3){\dot q}_4=0
	\end{array}
	\right.
	$$
Condition (\ref{dik}) requires $D_{1,2}=(q_1-q_3)^2+(q_2-q_4)^2=\not =0$, which is violated if and only if the points coincide ($L=0$). 
The first constraint ($\nu=1$) is the derivative of the holonomic relation $(q_1-q_3)^2 + (q_2-q_4)^2 = L^2$, hence $\mu^{(1)}_{1,2}=0$. For the second constraint, $A_{1,2}^{(2)}=-2$ and $A_{3,4}^{(2)}=2$. Calculating \eqref{muk2}:
	\begin{equation*}
		\mu^{(2)}_{1,2} = -A^{(2)}_{3,4}D_{1,2} - A^{(2)}_{1,2}D_{3,4} = -4\left( (q_1-q_3)^2+(q_2-q_4)^2 \right) = -4L^2.
	\end{equation*}
	Whenever $L > 0$, $\mu^{(2)}_{1,2}$ is never zero, confirming the nonholonomic nature of the "parallel velocity" condition. As a side note, these two constraints imply the non-linear one $|{\dot P}_1|=|{\dot P}_2|$ \cite{talmeccanica}.
\end{exe}

\begin{exe}Consider a planar system where ${\dot P}_1$ is normal to $\overrightarrow{P_1P_2}$ and ${\dot P}_2$ is longitudinal to it. With $(q_1, q_2, q_3, q_4)=(x_1, x_2, y_1, y_2)$, we have:
$$
\left\{
\begin{array}{l}
	(q_1-q_2){\dot q}_1+(q_3-q_4){\dot q}_3=0, \\
	(q_3-q_4){\dot q}_2-(q_1-q_2){\dot q}_4=0
\end{array}
\right.
$$
and condition (\ref{dik}) is $D_{1,2}=(q_1-q_2)(q_3-q_4)\not =0$, that is we exclude those configurations where the points are aligned with the coordinate axes. In this example we again have $n=4$ and $k=2$. 
For $\nu=1$ the coefficients $A_{i,j}^{(1)}$ lead to $\mu^{(1)}_{1,2} = (q_1-q_2)^2+(q_3-q_4)^2$. For $\nu=2$, the specific structure of the derivatives leads to $\mu^{(2)}_{1,2} = 0$.
The resulting values are effectively the same as before; however, in this case, the constraint corresponding to $\nu=1$ is non-integrable. Even here, the system could be expressed through a single linear constraint, namely ${\dot P}_1 \cdot {\dot P}_2=0$. 
\end{exe}

\subsubsection*{The case $\kappa=n-1$}

\noindent
In view of (\ref{nc}), we expect formula (\ref{fp}) to yield no independent conditions, as $N_c=0$. We shall now verify this result. The only possible index values are $(i,j)=(1,1)$, which implies $\kappa+i=\kappa+j=n$. Consequently, (\ref{fp}) takes the form

\begin{equation*}
		\mu^{(\nu)}_{1,1} = \sum_{r=1}^{n-1} \left( A^{(\nu)}_{r, n} D_r(n) - A^{(\nu)}_{r, n} D_r(n) \right) + A^{(\nu)}_{n, n} D
		 - \sum_{1\leq r<s\leq n-1} A^{(\nu)}_{r,s} D_{r,s} (n,n)
\end{equation*}

\noindent
The expression on the right-hand side vanishes identically for all $\nu$. Specifically, the terms in the first sum cancel each other out, $A^{(\nu)}_{n,n} = 0$ by (\ref{ela}), and the determinants in the final sum are zero because they each contain two identical columns (both equal to $n$).
This case shows that when there are multiple constraints, the space of motion is so restricted that the condition is satisfied automatically.

\begin{exe}
Let us directly verify the validity of (\ref{derlagrsomma0}) in the simplest case $n=2$, $\kappa=1$, where only the single constraint $g_1=a_{1,1}{\dot q}_1+a_{1,2}{\dot q}_2=0$ is involved, with $a_{1,i}(q_1, q_2)$, $i=1,2$ being arbitrary functions. The corresponding condition (\ref{dcdeflin}) is $a_{1,1}\delta q_1+a_{1,2}\delta q_2=0$, therefore the two vectors $({\dot q}_1, {\dot q}_2)$ and $(\delta q_1, \delta q_2)$ have the same direction. On the other hand, the Lagrangian derivatives (\ref{derlagrlin}) take the form
$$
{\cal D}_1 g_1= 
\left( \dfrac{\partial a_{1,1}}{\partial q_2}-
\dfrac{\partial a_{1,2}}{\partial q_1}\right) {\dot q}_2, \quad 
{\cal D}_2 g_1= 
\left( \dfrac{\partial a_{1,2}}{\partial q_1}-
\dfrac{\partial a_{1,1}}{\partial q_2}\right) {\dot q}_1
$$
and clearly they satisfy ${\cal D}_1 g_1{\dot q}_1+{\cal D}_2 g_1{\dot q}_2=0$. Since, as already observed, $\delta {\bf q}$ is parallel to ${\dot {\bf q}}$, also ${\cal D}_1 g_1\delta q_1+{\cal D}_2 g_1\delta q_2=0$ holds, which is precisely (\ref{derlagrsomma0}).
\end{exe}

\section{A geometric interpretation of the conditions}

\noindent
Even though the conditions derived above may appear complex or purely analytical, they admit a concrete geometric interpretation. To see this, we begin with the following preliminary observations:
\begin{itemize}
\item Considering the generalized velocities as vectors in $\mathbb{R}^n$, the constraint conditions (\ref{glinom}) state that at any position ${\mathbf{q}}$, the velocity must be orthogonal to each row vector ${\mathbf{a}}_\nu$ ($\nu=1, \dots, \kappa$) of the matrix $a_{\nu,j}$:
\begin{equation}
\label{inf1}{\mathbf{a}}_\nu({\mathbf{q}})\cdot {\dot {\mathbf{q}}}=0, \qquad \nu=1, \dots, \kappa
\end{equation}

\item At any position ${\mathbf{q}}$, the vector $A^{(\nu)}({\mathbf{q}}){\dot {\mathbf{q}}}$ (where $A^{(\nu)}$ is defined in (\ref{ela})) is orthogonal to ${\dot {\mathbf{q}}}$ for any generalized velocity:
	\begin{equation}
		\label{inf2}
		{\dot {\mathbf{q}}}\cdot A^{(\nu)}({\mathbf{q}}){\dot {\mathbf{q}}}=0, \qquad \nu=1, \dots, \kappa
	\end{equation}
Indeed, the matrix $A^{(\nu)}$ is skew-symmetric; hence, its associated quadratic form vanishes identically.
	
\item The condition (\ref{zero2linvm}) is equivalent to requiring that the vector $A^{(\nu)}{\dot {\mathbf{q}}}$ lies within the subspace spanned by the vectors $\{{\mathbf{a}}_\nu\}$:
	\begin{equation}
		\label{inf3}
		A^{(\nu)}({\mathbf{q}}){\dot {\mathbf{q}}} \in \text{span} \{ {\mathbf{a}}_1, \dots, {\mathbf{a}}_\kappa \}
	\end{equation}
\end{itemize}

\begin{rem}
From a geometric perspective, the case $\kappa = n-1$ is a special one: the constraint equations (\ref{inf1}) define the velocity $\dot{\mathbf{q}}$ as a vector belonging to the orthogonal complement $\text{span} \{ {\mathbf{a}}_1, \dots, {\mathbf{a}}_{n-1} \}^\perp$. Since this orthogonal complement is one-dimensional, any vector orthogonal to $\dot{\mathbf{q}}$ must necessarily lie back in the $(n-1)$-dimensional subspace spanned by $\{{\mathbf{a}}_\nu\}$. Because $A^{(\nu)}\dot{\mathbf{q}}$ is always orthogonal to $\dot{\mathbf{q}}$ by (\ref{inf2}), the requirement (\ref{inf3}) is identically satisfied without further geometric restrictions on the field ${\mathbf{a}}_\nu$.
\end{rem}

\subsection{The case $n=3$}

\noindent
Consider the case $n=3$, which allows for an intuitive visualization in three-dimensional space. The left-hand side of equality (\ref{zero2linvm}) can be rewritten using the vector product: \begin{equation} 
\label{anabla} 
A^{(\nu)}{\dot {\mathbf{q}}}= 
\begin{pmatrix} 0 & A^{(\nu)}_{1,2} & A^{(\nu)}_{1,3} \\ -A^{(\nu)}_{1,2} & 0 & A^{(\nu)}_{2,3} \\ -A^{(\nu)}_{1,3} & - A^{(\nu)}_{2,3} & 0 
\end{pmatrix} 
\begin{pmatrix} {\dot q}_1 \\ {\dot q}_2 \\ {\dot q}_3 
\end{pmatrix}= 
(\nabla \wedge {\mathbf{a}}_\nu) \wedge {\dot {\mathbf{q}}}
\end{equation}
where $\wedge$ denotes the cross product in $\mathbb{R}^3$ and $\nabla \wedge {\mathbf{a}}_\nu$ is the curl of the vector field ${\mathbf{a}}_\nu$. It follows that $A^{(\nu)}{\dot {\mathbf{q}}}$ is orthogonal both to the velocity $\dot{\mathbf{q}}$ (consistently with (\ref{inf2})) and to the curl $\nabla \wedge {\mathbf{a}}_\nu$.

\begin{rem}
	If the constraints (\ref{glinom}) are integrable (i.e., they derive from a holonomic potential), then $\nabla \wedge {\mathbf{a}}_\nu = \mathbf{0}$ for all $\nu$. In this case, the expression in (\ref{anabla}) vanishes identically, and the condition (\ref{inf3}) is trivially satisfied. This confirms that the complex conditions we found are specifically characteristic of nonholonomic systems.
\end{rem}

\noindent
We distinguish two cases based on the number of constraints $\kappa$:
\begin{description}
	\item[Case $\kappa=1$:] From (\ref{inf1}), the velocity $\dot{\mathbf{q}}$ is orthogonal to ${\mathbf{a}}_1$. At the same time, by using (\ref{anabla}) we see that condition (\ref{zero2linvm}) is equivalent to
	\begin{equation}
		\label{nablak1}
		(\nabla \wedge {\mathbf{a}}_1) \wedge {\dot {\mathbf{q}}} = \varrho_1^{(1)}{\mathbf{a}}_1
	\end{equation}
The vector product (\ref{nablak1} can occur if and only if both vectors $\nabla \wedge {\mathbf{a}}_1$ and  ${\dot {\mathbf{q}}}$ lie in the plane perpendicular to vector ${\mathbf{a}}_1$, hence if
	\begin{equation}
		\label{a1wedge}{\mathbf{a}}_1 \cdot (\nabla \wedge {\mathbf{a}}_1) = 0
	\end{equation}
This coincides with the first condition in (\ref{muk1}).

	\item[Case $\kappa=2$:] The velocity $\dot{\mathbf{q}}$ is orthogonal to both ${\mathbf{a}}_1$ and ${\mathbf{a}}_2$, which are linearly independent. The subspace $\text{span}\{{\mathbf{a}}_1, {\mathbf{a}}_2\}$ contains all vectors orthogonal to $\dot{\mathbf{q}}$. Since $A^{(\nu)}\dot{\mathbf{q}}$ is always orthogonal to $\dot{\mathbf{q}}$ by (\ref{inf2}), it must necessarily lie in $\text{span}\{{\mathbf{a}}_1, {\mathbf{a}}_2\}$. 
	Thus, condition (\ref{inf3}) is automatically satisfied for any velocity, regardless of the direction of $\nabla \wedge {\mathbf{a}}_\nu$.
	The case $n=3, \kappa=2$ illustrates the general structural property whenever the system is governed by $\kappa = n-1$ constraints (see Remark 2): condition (\ref{fp}) holds unconditionally without requiring further assumptions.
\end{description}

\begin{exe}
A standard example in the literature is
\begin{equation}
	\label{noskid}
g_1={\dot q}_1 \sin q_3-{\dot q}_2 \cos q_3=0.
\end{equation}
This equation represents the no-skidding condition for a wheeled vehicle. It implies that the velocity of the contact point is always aligned with the body's orientation, meaning that lateral motion is prevented.	
The example falls under the case $n=3$, $k=1$, where
$$
{\bf a}_1=(\sin q_3,-\cos q_3,0), \quad A_{1,2}^{(1)}=0, \;\;A_{1,3}^{(1)}=\cos q_3, \;\; A_{2,3}^{(1)}=\sin q_3
\quad \nabla \wedge {\bf a}_1=(-\sin q_3, \cos q_3, 0)
$$ 
The non-singularity condition (\ref{dik}) corresponds to $\sin q_3 \not = 0$.
In this case, condition (\ref{a1wedge}) is written as
$$
(\sin q_3,-\cos q_3,0)\cdot (-\sin q_3, \cos q_3, 0)=-1
$$
hence the property (\ref{derlagrsomma0}) cannot hold.
If we add any constraint $g_2=a_{2,1}{\dot q}_1+a_{2,2}{\dot q}_2+a_{2,3}{\dot q}_3 =0$, the property becomes valid, as we enter the $n=3, k=2$ case. Let us verify this, so as to provide an explicit example: by means of (\ref{derlagrlin}) we determine
$$
{\cal D}_1 g_1 ={\dot q}_3\cos q_3, \;\;{\cal D}_2 g_1={\dot q}_3\sin q_3, \;\;{\cal D}_3g_1=-{\dot q}_1 \cos q_3 -{\dot q}_2 \sin q_3
$$
thus we need to check that the expression
\begin{equation}
\label{derlagrex6}
\sum\limits_{i=1}^3 {\cal D}_i g_1 \delta q_i=
{\dot q}_3 \cos q_3 \delta q_1 + {\dot q}_3 \sin q_3 \delta q_2 - \left(
{\dot q}_1 \cos q_3 + {\dot q}_2 \sin q_3 \right)\delta q_3
\end{equation}
vanishes for all $\delta q_1$, $\delta q_2$, $\delta q_3$ such that (see definition (\ref{derlagrsomma0lin}))
\begin{equation}
\label{deltaqex6}
\left\{
\begin{array}{l}
\sin q_3\delta q_1 - \cos q_3\delta q_2=0, 
\\
a_{2,1}\delta q_1 +a_{2,2}\delta q_2+a_{2,3} \delta q_3 =0.
\end{array}
\right.
\end{equation}
From the first of (\ref{deltaqex6}) and from (\ref{noskid}) we infer
$$
\delta q_1 = \frac{\cos q_3}{\sin q_3} \delta q_2, \quad {\dot q}_1 = \frac{\cos q_3}{\sin q_3}{\dot q}_2
$$
through which expression (\ref{derlagrex6}) becomes $\frac{1}{\sin q_3}({\dot q}_3 \delta q_2 - {\dot q}_2 \delta q_3)$. On the other hand, the constraint equation $g_2=0$ combined with the second condition in (\ref{deltaqex6}) yields $a_{2,3}({\dot q}_3 \delta q_2 - {\dot q}_2 \delta q_3)=-a_{2,1} ({\dot q}_1 \delta q_2- {\dot q}_2 \delta q_1)$ and the last expression is zero, as we immediately deduce from the first condition in (\ref{deltaqex6}) combined with (\ref{noskid}). One concludes that (\ref{derlagrex6}) vanishes for all virtual displacements that satisfy (\ref{deltaqex6}).
A similar argument shows that also $\sum\limits_{i=1}^3 {\cal D}_i g_2 \delta q_i=0$ for the set of displacements (\ref{deltaqex6}).
Regarding the second constraint, we can assume for instance that the wheel rolls without longitudinal slipping, which couples the forward velocity to the rotation. For the sake of simplicity, we consider the constraint relating the linear velocities to the angular rotation:
$$
g_2=\cos q_3 \dot{q}_1 + \sin q_3 \dot{q}_2 - r \dot{q}_3 = 0
$$
Here, the coefficients $a_{2,j}$ become $a_{2,1} = \cos q_3$, $a_{2,2} = \sin q_3$$a_{2,3} = -r$ (where $r$ denotes the radius). The coupling of constraint (\ref{noskid}) (no skidding) with $g_2=0$ (no slipping) satisfies property (\ref{derlagrsomma0lin}), as it belongs to the class of systems with $n=3$ and $k=2$.
\end{exe}

\begin{figure}[!h]
	\begin{center}
		\includegraphics[width=0.36\textwidth]{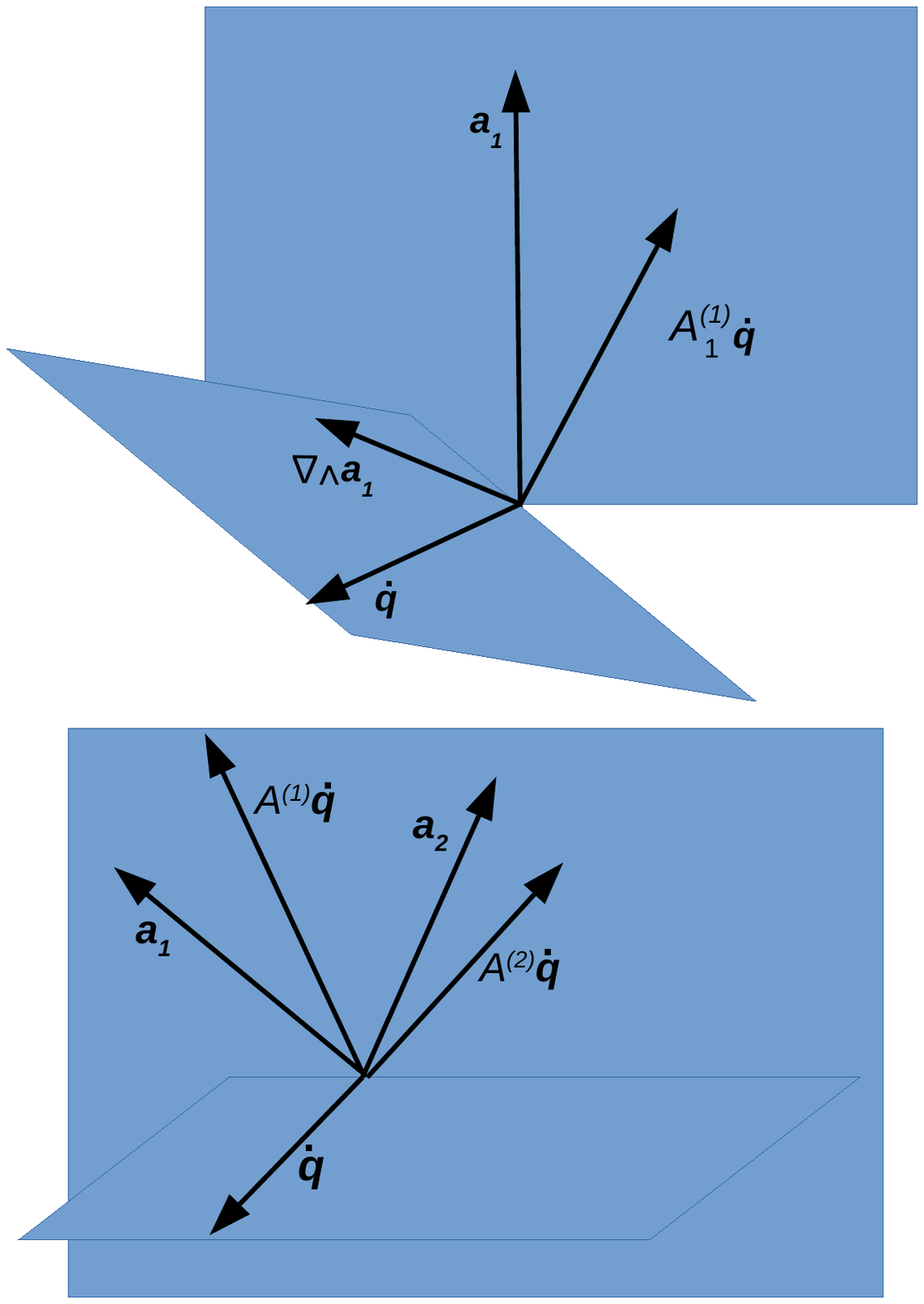}
		\caption{
upper image (case $n=3, \kappa=1$). Geometric interpretation for a single constraint in $\mathbb{R}^3$. The velocity $\dot{\mathbf{q}}$ is constrained to the plane orthogonal to $\mathbf{a}_1$. The vector $A^{(1)}\dot{\mathbf{q}}$, being the cross product of the curl $\nabla \wedge \mathbf{a}_1$ and $\dot{\mathbf{q}}$, is orthogonal to both. The condition (\ref{inf3}) requires $A^{(1)}\dot{\mathbf{q}}$ to be collinear with $\mathbf{a}_1$, which geometrically occurs if and only if the curl $\nabla \wedge \mathbf{a}_1$ is orthogonal to $\mathbf{a}_1$ itself. The picture illustrates that to ``align'' the constraint forces, the curl of the field must be perpendicular to the field itself. This is a geometric condition imposed on the system.
Lower image (case $n=3, \kappa=2$). Geometric interpretation for two constraints in $\mathbb{R}^3$. The velocity $\dot{\mathbf{q}}$ is restricted to the intersection of the planes orthogonal to $\mathbf{a}_1$ and $\mathbf{a}_2$ (a one-dimensional subspace). Any vector $A^{(\nu)}\dot{\mathbf{q}}$ orthogonal to $\dot{\mathbf{q}}$ (lying in the "vertical" plane in the sketch) is automatically a linear combination of $\mathbf{a}_1$ and $\mathbf{a}_2$. Thus, the condition is always satisfied regardless of the curl of the constraint fields. 
		}
		\label{fig:n3k12}
	\end{center}
\end{figure}

\begin{figure}[!h]
\begin{center}
\includegraphics[width=0.36\textwidth]{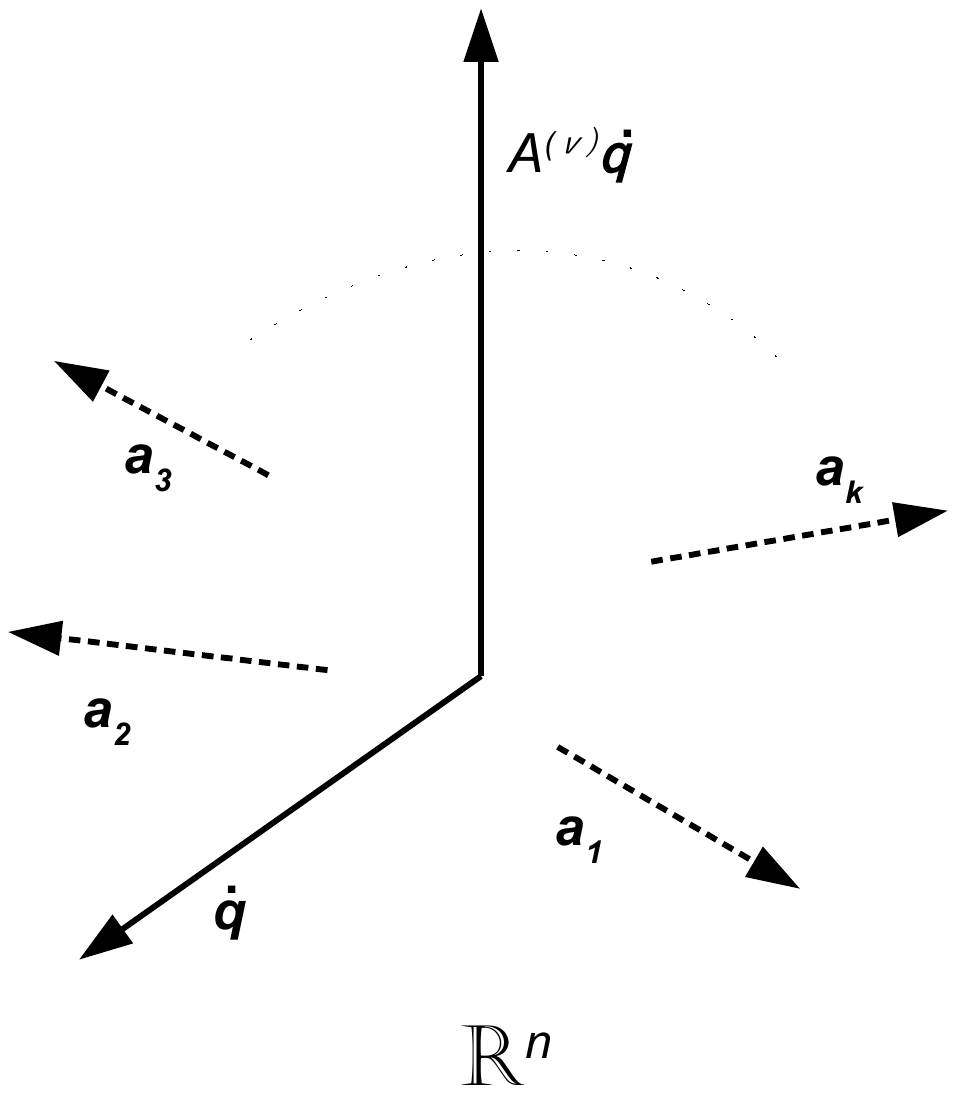}
\caption{
	 (general case $n > \kappa$): schematic representation in $\mathbb{R}^n$. The velocity $\dot{\mathbf{q}}$ is orthogonal to the $\kappa$-dimensional subspace $V = \text{span}\{\mathbf{a}_1, \dots, \mathbf{a}_\kappa\}$. The consistency of the dynamics requires that the vectors $A^{(\nu)}\dot{\mathbf{q}}$, which are inherently orthogonal to the velocity, "return" to lie within the subspace $V$.
The scheme aims to show how the vector $A\dot{\mathbf{q}}$ must ``fall back'' into the group of constraint vectors $\mathbf{a}_1 \dots \mathbf{a}_\kappa$.
}
\label{fig:nk}
\end{center}
\end{figure}

\section{Concluding remarks}

\subsection{Comparative analysis: condition (\ref{fp}) vs.~Frobenius integrability}

\noindent
It is instructive to compare the conditions derived in this work with the classical Frobenius Theorem. While both involve the curl (or Lie brackets) of the constraint vector fields, they address fundamentally different physical questions.The Frobenius Theorem determines whether a distribution of subspaces is integrable, which implies that the constraints are holonomic. For a single constraint ($\kappa=1$), the Frobenius condition is expressed as $\mathbf{a}_1 \cdot (\nabla \wedge \mathbf{a}_1) = 0$. As shown in Section 4, this is identical to our condition (\ref{fp}). In this specific case, the dynamic consistency of the system according to Chetaev’s principle is possible if and only if the constraint is integrable. However, a significant divergence emerges when $\kappa > 1$:
\begin{itemize}
\item \textbf{Frobenius Integrability:} requires the distribution spanned by $\{\mathbf{a}_1, \dots, \mathbf{a}_\kappa\}$ to be involutive. This imposes strict geometric restrictions on each vector field and their mutual Lie brackets, essentially requiring the distribution to be tangent to a family of nested manifolds.
\item \textbf{Condition (\ref{fp}):} does not require the distribution to be integrable. Instead, it allows for individual nonholonomic constraints, provided that their ``non-integrable parts'' interact in a specific way. These interactions are algebraically captured by the determinants of the coefficients in the overdetermined system.
\end{itemize}
As seen in the case $n=3, \kappa=2$, condition (\ref{fp}) is always satisfied, even if the constraints are strongly nonholonomic (i.e., they violate Frobenius). This suggests that Chetaev's dynamic consistency is a much broader requirement than geometric integrability. While Frobenius requires a ``static'' alignment of the constraint planes, condition (\ref{fp}) allows for a ``dynamic'' compensation between multiple nonholonomic constraints.

\subsubsection*{Dynamic compensation in nonholonomic systems}

\noindent
In conclusion, the comparison with Frobenius' Theorem highlights that the validity of condition (\ref{fp}) for $\kappa > 1$ should not be interpreted as a property of individual constraints, but rather as a phenomenon of collective coordination. While Frobenius integrability is a "static" property—requiring each constraint to individually or collectively define a manifold—condition (\ref{fp}) introduces what can be termed dynamic compensation between constraints.Individual vector fields may be nonholonomic and geometrically "disordered" when taken separately; however, if their interactions (captured by the determinants of the coefficients) are correctly balanced, the system maintains a global dynamic consistency according to Chetaev’s principle. This result significantly broadens the class of analyzable physical systems, suggesting that dynamic consistency is a much more resilient property than simple geometric integrability. In physical terms, this implies that in the presence of multiple constraints, the "deviations" from holonomy can cancel each other out, allowing the system to behave consistently even when the underlying geometry is intrinsically non-integrable.

\subsection{Conclusions}

\noindent
In this work, we have addressed a fundamental question in the analytical mechanics of nonholonomic systems: the characterization of constraint sets that satisfy the commutation property between the variation and the time derivative within the framework of Chetaev's postulate. 

The investigation led to the derivation of an explicit set of necessary conditions, algebraically encoded in the skew-symmetric matrix $\mu^{(\nu)}_{i,j}$. These conditions represent a bridge between the purely geometric properties of the constraint manifold (Frobenius integrability) and the dynamic consistency required by the d'Alembert-Lagrange principle. 

Our analysis highlights several key insights:
\begin{itemize}
	\item \textbf{Dynamic vs. Geometric Consistency}: For a single constraint ($\kappa=1$), the property \eqref{derlagrsomma0} coincides with the integrability of the constraint. However, for multiple constraints ($\kappa > 1$), the system may satisfy the Lagrangian derivative property even if the individual constraints are nonholonomic. This suggests that the interactions between constraints can "compensate" for their lack of integrability.
	
	\item \textbf{Structural Independence}: In systems with high constraints ($\kappa = n-1$), the property is satisfied regardless of the specific form of the constraints. This clarifies why many low-degree-of-freedom nonholonomic models found in the literature appear to satisfy variational identities that would otherwise be lost in higher dimensions.
	
	\item \textbf{Geometric Clarity}: Through the vector-based interpretation in $\mathbb{R}^3$, we have shown that the property \eqref{derlagrsomma0} is deeply linked to the alignment of the curl of the constraint fields with the subspace of the constraints themselves.
\end{itemize}

\subsection{Future perspectives}

\noindent
Future investigations will aim to extend this result in two main directions. First, we intend to explore the physical implications of the matrix $\mu$ in the context of the \textit{ideality} of constraints; specifically, understanding if these conditions identify a class of "quasi-holonomic" systems that admit simplified versions of the equations of motion. 

Second, the analysis could be generalized to non-linear constraints, where the presence of acceleration terms in the Lagrangian derivative introduces a dependency on the system's dynamics (the kinetic energy) rather than just the kinematics. This would open the way for a broader understanding of how nonholonomic systems can be categorized not only by their geometry but also by their variational behavior.

Finally, a significant contribution will be provided by the identification of systems that satisfy the progressively introduced hypotheses. This will serve to test whether the defined class of nonholonomic constraints is effectively broader than that of constraints with an integrating factor, and to examine the reciprocal effect of nonholonomic constraints through examples where the matrices $\mu^{\nu}_{i,j}$ vanish.

\appendix

\section{Proof of Theorem 1}

\noindent
The proof is structured as follows: we first establish the consistency of the overdetermined system using the method of bordered minors. We then perform a combinatorial expansion of the resulting determinants, finally utilizing a key Lemma to express the conditions in terms of independent generalized velocities.

\noindent
Conditions (\ref{zero2lin}) can be written as
\begin{equation}
	\label{betarho}
	{\bm \beta}^{(\nu)} = {\bm \varrho}^{(\nu)}, \qquad \nu=1, \dots, \kappa
\end{equation}
where ${\bm \beta}^{(\nu)}$ is the $n$-dimensional column vector ${\bm \beta}^{(\nu)} = A^{(\nu)} {\dot{\mathbf{q}}}$, with ${\dot{\mathbf{q}}} = ({\dot q}_1, \dots, {\dot q}_n)^T$,  $A^{(\nu)}$ the skew-symmetric matrix of order $n$ defined whose entries are defined in (\ref{ela}) and ${\bm \varrho}^{(\nu)}$ is the $n$-dimensional column vector 
$$
{\bm \varrho}^{(\nu)}=
\left(\sum\limits_{\mu=1}^\kappa \varrho_\mu^{(\nu)}a_{\mu, 1}, \cdots,
\sum\limits_{\mu=1}^\kappa \varrho_\mu^{(\nu)} a_{\mu, n}\right)^T.
$$

\noindent
We view (\ref{betarho}) as a linear system of $n$ equations in the $\kappa < n$ unknowns $(\varrho_1^{(\nu)}, \cdots, \varrho_\kappa^{(\nu)})$. By applying the basic principles of overdetermined systems, we know that a solution exists if and only if the rank of the $n \times (\kappa+1)$ augmented matrix $(A^{(\nu)}|{\bm \beta}^{(\nu)})$ remains $\kappa$, by virtue of assumption (\ref{dik}). To check this, it suffices to consider the $n-\kappa$ bordered minors of order $\kappa+1$ formed by the first $\kappa$ rows of the augmented matrix joined with the $(\kappa+i)$-th row for $i=1, \dots, n-\kappa$. These minors must vanish:
\begin{equation}
	\label{n-kdet}
	\det \begin{pmatrix} 
		a_{1,1} & \cdots & a_{1,\kappa} & \beta^{(\nu)}_1 \\
		\vdots & \ddots & \vdots & \vdots \\
		a_{\kappa,1} & \cdots & a_{\kappa,\kappa} & \beta^{(\nu)}_\kappa \\
		a_{\kappa+i,1} & \cdots & a_{\kappa+i,\kappa} & \beta^{(\nu)}_{\kappa+i}
	\end{pmatrix} = 0, \qquad i=1, \dots, n-\kappa
\end{equation}
For any set of $\kappa$ indices $\{i_1, i_2, \dots, i_\kappa\} \subset \{1, \dots, n\}$, not necessarily ordered, we define $D_{i_1, i_2, \dots, i_\kappa}$ as the determinant of the square submatrix formed by the corresponding columns ${\bf a}_{i_1}$, $\cdots$, ${\bf a}_{i_\kappa}$ of $(a_{\nu,j})$:
\begin{equation}
	\label{det_general}
	D_{i_1, i_2, \dots, i_\kappa} = \det \begin{pmatrix}
 {\bf a}_{i_1} & {\bf a}_{i_2} & \cdots & {\bf a}_{i_\kappa}
	\end{pmatrix}
\end{equation}

\noindent
This definition ensures that $D_{\sigma(i_1), \dots, \sigma(i_\kappa)} = \text{sgn}(\sigma) D_{i_1, \dots, i_\kappa}$ for any permutation $\sigma$. Consequently, $D$ is an alternating function of its indices, a property that will be extensively used to manipulate the sums in the following sections. In particular, it is worth noting that $D$ vanishes whenever any two indices are identical.

By applying the cofactor expansion along the $(\kappa+1)$-th column of the augmented matrix in (\ref{n-kdet}), we obtain:
\begin{equation}
	\label{condcsegno}
	\sum_{r=1}^{\kappa} (-1)^{\kappa+1+r} \beta^{(\nu)}_r D_{1, \dots, \hat{r}, \dots, \kappa, \kappa+i} + \beta^{(\nu)}_{\kappa+i} D_{1, \dots, \kappa} = 0, \quad i=1, \dots, n-\kappa
\end{equation}
where the hat notation $\hat{r}$ indicates that the index $r$ is omitted. 
To simplify (\ref{condcsegno}), we reorder the indices of the determinants by shifting the index $\kappa+i$ from the last position to the $r$-th position. This operation requires $\kappa-r$ transpositions, each contributing a factor of $-1$. Consequently, each term in the sum gains a total phase factor of:
\begin{equation*}
	(-1)^{\kappa+1+r} \cdot (-1)^{\kappa-r} = (-1)^{2\kappa+1-r+r} = (-1)^{2\kappa+1} = -1
\end{equation*}
This uniform sign change allows us to move the summation to the other side of the equality, resulting in the following equivalent form:
\begin{equation}
	\label{condc}
	\sum_{r=1}^{\kappa} \beta^{(\nu)}_r D_r (\kappa+i) = \beta^{(\nu)}_{\kappa+i} D, \quad i=1, \dots, n-\kappa
\end{equation}
where the determinants $D_r(\kappa+i)$ are defined in (\ref{drl}).

\begin{rem}
As an alternative approach, one may solve the first $\kappa$ equations in (\ref{betarho}) for the unknowns $\varrho^{(\nu)}_r$ using Cramer's Rule. Given the non-singularity assumption (\ref{dik}), we have 
$\varrho^{(\nu)}_r = \dfrac{D_r \left({\bm \beta}^{(\nu)}\right)}{D}$ where $D_r \left({\bm \beta}^{(\nu)}\right)$ is the determinant obtained by replacing  
the $r$-th column of the matrix associated with $D$ with the vector $(\beta^{(\nu)}_1, \dots, \beta^{(\nu)}_\kappa)^T$. Substituting these expressions into the remaining $n-\kappa$ equations (from position $\kappa+1$ to $n$) leads directly to (\ref{condc}), confirming the consistency of the Chetaev projection.
\end{rem}

\noindent
At this point, we express the conditions \eqref{condc} as polynomials in the generalized velocities $(\dot{q}_1, \dots, \dot{q}_n)$. 
By substituting in (\ref{condc}) the explicit expressions for the coefficients, $\beta^{(\nu)}_r = \sum_{s=1}^n A^{(\nu)}_{r,s} \dot{q}_s$, and exploiting the skew-symmetry of the matrix $A^{(\nu)}$, we can group the terms associated with each velocity $\dot{q}_s$:
\begin{equation}
\label{condcexpl}\sum_{s=1}^n \dot{q}_s \left( \sum\limits_{r=1}^\kappa A^{(\nu)}_{r,s} D_r(\kappa+i) - A^{(\nu)}_{\kappa+i,s} D \right) = 0.
\end{equation}
In expanded form, the coefficient for a generic velocity $\dot{q}_s$ in \eqref{condcexpl} becomes:
$$
\begin{array}{ll}
\textrm{for}\;s\leq  \kappa: & \sum\limits_{r=1}^\kappa A^{(\nu)}_{r,s} D_r(\kappa+i) + A^{(\nu)}_{s, \kappa+i} D\\
\\
\textrm{for}\; s = \kappa+j, \;j\in \{1, \dots, n-\kappa\}: &\sum\limits_{r=1}^\kappa A^{(\nu)}_{r,\kappa+j} D_r(\kappa+i) + A^{(\nu)}_{\kappa+j, \kappa+i} D
\end{array}
$$
Notice that the sign of the last term $A^{(\nu)}_{\kappa+j, \kappa+i}$ correctly accounts for the skew-symmetry $A^{(\nu)}_{\kappa+j, \kappa+i} = -A^{(\nu)}_{\kappa+i, \kappa+j}$ when the index $j$ exceeds $i$.

\noindent
We cannot yet conclude that the expressions in parentheses vanish identically, as the generalized velocities $\dot{q}_s$ are not independent due to the nonholonomic constraints \eqref{glinom}. These restrictions must be manipulated further to express the conditions solely in terms of a set of independent velocities.

\begin{lem}
Consider the system of $\kappa$ linear constraints given by $\sum\limits_{j=1}^n a_{\nu,j}\dot{q}_j = 0$, $\nu=1, \dots,\kappa$. 
For any choice of $\kappa-1$ distinct indices $\{i_1, \dots, i_{\kappa-1}\} \subset \{1, \dots, n\}$, the following identity holds:
\begin{equation}
\label{relv}
\sum_{h=1}^n \dot{q}_h D_{h, i_1, \dots, i_{\kappa-1}} = 0
\end{equation}where $D_{h, i_1, \dots, i_{\kappa-1}}$ denotes the determinant of the $\kappa \times \kappa$ matrix formed by the columns $\{h, i_1, \dots, i_{\kappa-1}\}$ of the constraint matrix.
\end{lem}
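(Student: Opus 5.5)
The plan is to use multilinearity of the determinant in its first column. Since $D_{h, i_1, \dots, i_{\kappa-1}} = \det\left({\bf a}_h \; {\bf a}_{i_1} \cdots {\bf a}_{i_{\kappa-1}}\right)$, where ${\bf a}_\ell$ denotes the $\ell$-th column of the $\kappa\times n$ constraint matrix, the columns ${\bf a}_{i_1},\dots,{\bf a}_{i_{\kappa-1}}$ are held fixed while $h$ runs over $1,\dots,n$. Linearity of $\det$ in the first column then gives
\begin{equation*}
\sum_{h=1}^n \dot q_h D_{h,i_1,\dots,i_{\kappa-1}} = \det\Big( \sum_{h=1}^n \dot q_h {\bf a}_h \;\; {\bf a}_{i_1} \;\cdots\; {\bf a}_{i_{\kappa-1}}\Big).
\end{equation*}

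Next I would read off the first column of this determinant. The vector $\sum_{h=1}^n \dot q_h {\bf a}_h$ is exactly $\left(\sum_h a_{1,h}\dot q_h, \dots, \sum_h a_{\kappa,h}\dot q_h\right)^T = (g_1,\dots,g_\kappa)^T$. Along any admissible velocity this vector vanishes by (\ref{glinom}), so the determinant has a zero column and is therefore zero. This proves (\ref{relv}).

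As a cross-check, I would also expand $D_{h,i_1,\dots,i_{\kappa-1}}$ along its first column by cofactors: $D_{h,\dots} = \sum_{\nu=1}^\kappa a_{\nu,h} C_\nu$, where the cofactors $C_\nu$ depend only on $i_1,\dots,i_{\kappa-1}$ and not on $h$. Exchanging the sums gives $\sum_\nu C_\nu g_\nu = 0$. This version makes clear that the identity is a fixed linear combination of the constraints, with coefficients independent of $h$.

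The argument is almost immediate. The only points needing care are that the identity holds on the constraint set, not identically in $\dot{\bf q}$, and that the distinctness of $i_1,\dots,i_{\kappa-1}$ is not actually required. If indices repeat, or if $h$ coincides with some $i_m$, the terms vanish by the alternating property already noted after (\ref{det_general}), so the identity remains valid.
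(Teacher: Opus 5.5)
Your proof is correct and takes essentially the same route as the paper. The paper also fills the free column with the vector of constraint values $\sum_h \dot q_h {\bf a}_h$, which vanishes on admissible velocities, and then expands that null determinant by linearity (Laplace's rule) along this column to recover (\ref{relv}). Your side remarks — the identity holds only on the constraint set, and distinctness of the indices is not needed — are accurate but not required.
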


\noindent
This identity follows from the linearity of the constraints. Consider a $(\kappa \times \kappa)$ determinant where $\kappa-1$ columns are fixed. If we replace the remaining column with the vector result of the constraints, the determinant vanishes because the columns are linearly dependent for any admissible velocity. Expanding this null determinant via Laplace’s rule along the column of velocities recovers exactly the relation (\ref{relv}).

\noindent
The Lemma remains consistent even at the lowest constraint dimension $\kappa=1$: the set of indices $\{i_1, \dots, i_{\kappa-1}\}$ is empty. In this case, the determinants reduce to the individual coefficients, $D_h = a_{1,h}$, and the identity (\ref{relv}) becomes:$$\sum_{h=1}^n a_{1,h} \dot{q}_h = 0$$which is identical to the single constraint equation. This confirms that Lemma 1 is a natural generalization of the constraint equations themselves.

\noindent
Notice that the relation (\ref{relv}) is invariant (up to a sign) under any permutation of the indices $\{i_1, \dots, i_{\kappa-1}\}$. Thus, there are exactly $\binom{n}{\kappa-1}$ such relations.
Furtermore, any term where $h \in \{i_1, \dots, i_{\kappa-1}\}$ vanishes because the determinant would contain two identical columns. Consequently, an equivalent form is:
\begin{equation}
\label{relv2}\sum_{h=\kappa}^n \dot{q}_{i_h} D_{i_h, i_1, \dots, i_{\kappa-1}} = 0
\end{equation}
where $(i_1, \dots, i_n)$ is a permutation of $\{1, \dots, n\}$. 
\begin{exe}($n=4, \kappa=3$). There are $\binom{4}{2}=6$ relations. For the selection $\{i_1, i_2\} = \{3, 4\}$, we obtain:$$\dot{q}_1 D_{1,3,4} + \dot{q}_2 D_{2,3,4} = 0$$
This shows how the Lemma explicitly links the velocities of the independent coordinates to those of the dependent ones.
\end{exe}

\begin{exe} 
($n$ arbitrary, $\kappa=2$). The $n$ relations $\sum_{j=1}^n \dot{q}_j D_{j,i} = 0$ can be derived by taking the two constraint equations, multiplying the first by $a_{2,i}$ and the second by $a_{1,i}$. Their difference eliminates the $i$-th variable's coefficient, yielding the determinants $D_{j,i}$ and verifying the Lemma for the $\kappa=2$.
\end{exe}

\noindent
Lemma 1 serves as the fundamental tool for eliminating the dependent generalized velocities $\dot{q}_1, \dots, \dot{q}_\kappa$, allowing the conditions to be expressed solely in terms of the independent parameters $\dot{q}_{\kappa+1}, \dots, \dot{q}_n$. By applying this Lemma to the evolution equations, we ensure that the terms associated with the partial derivatives $\partial \phi_\nu / \partial q$ are not arbitrary, but represent the exact projection of the nonholonomic deviation onto the directions permitted by the constraints.

To complete the proof, we group the terms in (\ref{condcexpl}) related to $\dot{q}_1, \dots, \dot{q}_\kappa$ and $\dot{q}_{\kappa+i}$ (for a fixed $i \in \{1, \dots, n-\kappa\}$). Using the notation $D_r(\alpha)$ to denote the determinant where the $r$-th column is replaced by the $\alpha$-th column, the pairs are expressed as
\begin{equation}
\label{pair_dr}
\begin{cases}
A_{r,s} \left( \dot{q}_s D_r(\kappa+i) - \dot{q}_r D_s(\kappa+i) \right) & \text{for } 1 \leq r < s \leq \kappa \\
A_{p,\kappa+i} \left( \dot{q}_{\kappa+i} D_p(\kappa+i) + \dot{q}_p D \right) & \text{for } p = 1, \dots, \kappa
\end{cases}
\end{equation}
We now apply the identity (\ref{relv}) to the terms in parentheses. By exploiting the antisymmetry of the determinants and performing the necessary index permutations, the first group of pairs yields
\begin{equation}
\label{relvsr}\dot{q}_s D_r(\kappa+i) - \dot{q}_r D_s(\kappa+i) = (-1)^{\kappa+s-r} \left( \dot{q}_s D_{s, 1, \dots, \hat{r}, \dots, \hat{s}, \dots, \kappa, \kappa+i} + \dot{q}_r D_{r, 1, \dots, \hat{r}, \dots, \hat{s}, \dots, \kappa, \kappa+i} \right)
\end{equation}
where the notation $\hat{\cdot}$ indicates the omission of the corresponding index. Applying (\ref{relv2}) with the fixed set of $\kappa-1$ indices $(1, \dots, \hat{r}, \dots, \hat{s}, \dots, \kappa, \kappa+i)$, the expression is replaced by a sum over the independent velocities:
\begin{equation}
= - (-1)^{\kappa+s-r} \sum\limits_{j =1, \dots, n-\kappa, j \neq i} \dot{q}{\kappa+j} D_{\kappa+j, 1, \dots, \hat{r}, \dots, \hat{s}, \dots, \kappa, \kappa+i}
\end{equation}
Similarly, for the second group ($p=1, \dots, \kappa$), the relation becomes
\begin{equation}
\label{relvp}
\dot{q}_{\kappa+i} D_p(\kappa+i) + \dot{q}_p D = (-1)^{p} \sum\limits_{j =1, \dots, n-\kappa, j \neq i} \dot{q}_{\kappa+j} D_{\kappa+j, 1, \dots, \hat{p}, \dots, \kappa}
\end{equation}
The relations (\ref{relvsr}) and (\ref{relvp}) demonstrate that all pairs in (\ref{pair_dr}) can be expressed as linear combinations of the independent velocities $\dot{q}_{\kappa+1}, \dots, \dot{q}_n$.Substituting these results back into the full expression (\ref{condcexpl}) and collecting the coefficients for each $\dot{q}_{\kappa+j}$, we obtain
\begin{equation}
\sum\limits_{j=1}^{n-\kappa} \mu^{(\nu)}_{i,j} \dot{q}_{\kappa+j} = 0 \quad \text{for } i=1, \dots, n-\kappa, \quad \nu=1, \dots, \kappa
\end{equation}
where $\mu^{(\nu)}_{i,j}$ are exactly the coefficients defined in (\ref{mu}). Since the velocities $\{\dot{q}_{\kappa+j}\}$ are independent, their coefficients must vanish identically, confirming that the conditions (\ref{fp}) hold. $\quad \square$


\end{document}